\documentclass[10pt]{article}

\usepackage{PRIMEarxiv}
\usepackage[utf8]{inputenc}
\usepackage[T1]{fontenc}

\usepackage{amsmath,amssymb,amsfonts,mathtools,amsthm}
\usepackage{graphicx}
\usepackage{subcaption}
\usepackage{booktabs,multirow,array,tabularx,longtable}
\usepackage{enumitem}
\usepackage{xcolor}
\usepackage{tcolorbox}
\usepackage{comment}
\usepackage{pgfplots}
\usepackage{float}
\usepackage{placeins}
\usepackage{nicefrac}
\usepackage{microtype}
\usepackage{url}
\usepackage{cite}
\usepackage{hyperref}
\usepackage[pageref]{backref}
\usepackage{makeidx}

\graphicspath{{./}{media/}}

\numberwithin{equation}{section}
\setlist[itemize]{nosep,leftmargin=*,topsep=0pt,partopsep=0pt}

\pgfplotsset{compat=newest}

\newtheorem{theorem}{Theorem}[section]

\newtheorem{proposition}[theorem]{Proposition}

\theoremstyle{definition}
\newtheorem{definition}[theorem]{Definition}

\newtheorem{problem}[theorem]{Problem}
\newtheorem{example}[theorem]{Example}

\theoremstyle{remark}
\newtheorem{remark}[theorem]{Remark}

\newcounter{myremark}[section]
\renewcommand{\themyremark}{\thesection.\arabic{myremark}}

\newtcolorbox{opinion}{
  colback=gray!5,
  colframe=black!40,
  boxrule=0.5pt,
  arc=2pt,
  left=6pt,
  right=6pt,
  top=4pt,
  bottom=4pt,
  fontupper=\small\sffamily,
}

\renewcommand*{\backrefalt}[4]{%
\ifcase #1 %
(Not cited)%
\or
(Cited on p.~#2)%
\else
(Cited on pp.~#2)%
\fi
}

\makeindex

\title{LLM Semantic Signaling Game and Mechanism Design: \\
Systematic Blindness, Awareness Shaping, and Mindset Dynamics}
\author{Quanyan Zhu\\
\small Department of Electrical and Computer Engineering\\
\small New York University Tandon School of Engineering\\
\small Brooklyn, NY 11201, USA\\
\small \texttt{qz494@nyu.edu}}

\begin{document}
\maketitle

\begin{abstract}
Large language models (LLMs) increasingly mediate strategic interaction
through natural language. This paper develops a semantic signaling game
for LLM-mediated communication in which a sender selects a high-level
semantic control, an LLM maps that control into a stochastic message, and
a receiver evaluates the message through an awareness-dependent scoring
rule. Receiver awareness is modeled as a type that determines which
linguistic features are perceived, weighted, and used for inference. The
framework connects prompt-based control, statistical detection, and
game-theoretic equilibrium analysis. We establish Gaussian
approximations for aggregate message scores, derive likelihood-ratio
decision rules, and characterize Perfect Bayesian Nash equilibria and
incentive-compatible control selection. We then study mechanism-design
questions in which interventions reshape receiver awareness, penalize
deceptive controls, or reweight populations to induce benign
pooling. Numerical experiments validate the Gaussian approximation,
show awareness-ordering effects, illustrate mindset dynamics, and
quantify benign-pooling design levers in a phishing case study. The
model explains systematic
blindness as a feature-space
limitation and provides a basis for analyzing mindset dynamics in
AI-enabled phishing and other adversarial language-mediated systems.
\end{abstract}

\keywords{large language models \and signaling games \and mechanism design
\and receiver awareness \and phishing detection}

\section{Introduction}\label{sec:introduction}

Large language models (LLMs) are increasingly deployed as interactive
agents that communicate, reason, negotiate, and coordinate through
natural language. These systems are no longer passive text generators:
they participate in workflows, act on behalf of users, and interact with
other human or artificial agents. As a result, language becomes both a
medium of communication and a strategic control variable. A prompt,
instruction, or conversational framing can shape not only the surface
form of a generated message but also the receiver's inference and
subsequent action.

This shift changes the modeling problem. Classical signaling games
provide a natural starting point because they study communication under
asymmetric information. A sender has a private type, chooses a signal,
and a receiver interprets that signal before acting. However, standard
signaling models usually represent messages as abstract symbols or
low-dimensional numerical signals. They do not explicitly capture
token-level language generation, semantic control, or heterogeneous
perception of linguistic evidence. These limitations become especially
important in LLM-mediated systems, where messages are stochastic,
high-dimensional, and shaped by prompts.

The central difficulty is that an LLM-generated message is neither a
fixed action nor a fully controlled signal. It is a random token sequence
drawn from a distribution induced by a semantic input. The sender can
choose the high-level intent, framing, tone, or persuasive style of the
message, but the realized output is generated by the language model. The
receiver then observes the realized message through a limited
representation: some features are salient, others are ignored, and the
same message can lead to different scores for receivers with different
awareness levels. A useful theory must therefore connect semantic
control, stochastic generation, feature-based perception, and strategic
incentives in one model.

This difficulty is sharper because the receiver does not observe the
sender's semantic control directly. The receiver observes
only the generated message and must infer intent from linguistic evidence
that may be incomplete, noisy, or strategically chosen. A sender can
therefore shape beliefs without deterministically choosing every token.
Conversely, a receiver may respond rationally to the features it observes
while still missing the evidence that would distinguish benign and
malicious intent. The resulting interaction combines a stochastic
communication channel with a game of asymmetric information.

Phishing illustrates this need. Generative AI lowers the cost of
creating personalized, fluent, and adaptive deceptive messages. A
malicious sender may choose a semantic control that emphasizes urgency,
authority, or routine administrative framing, while a receiver evaluates
the message using only a limited set of cues. Some receivers notice
credential requests, suspicious links, or urgency markers; others miss
these features entirely. Thus, susceptibility is not simply a matter of
preference or random noise. It can arise from \emph{systematic
blindness}: a receiver's effective feature space may exclude cues that
are statistically informative about deception.

This paper develops a semantic signaling framework for this setting. We
model an LLM as a stochastic channel from semantic controls to token
sequences. The sender chooses a control based on its private type, and
the receiver maps the generated message into an awareness-dependent
score. This score is then used for belief updating and action selection.
The formulation makes it possible to connect three levels of analysis:
language generation, statistical detection, and equilibrium incentives.

The model also clarifies the role of receiver awareness. Awareness is
not treated as a scalar accuracy parameter alone. Instead, it determines
which token-level or semantic features are available for scoring. A
low-awareness receiver may fail to detect deception because the relevant
features are not represented, not because the receiver applies a weak
threshold to a correct statistic. This distinction matters for defense:
training, warnings, interface design, and automated support can change
the receiver's representation rather than merely shifting a decision
threshold.

The distinction also changes how interventions should be modeled. A
defense that only raises a threshold may reduce false acceptance, but it
does not teach the receiver to perceive previously invisible evidence. A
defense that expands the feature set changes the statistic itself. In an
LLM-mediated environment, both effects matter because a strategic sender
can adapt the semantic control to the receiver population. The paper
therefore treats awareness, thresholds, and sender incentives as coupled
objects rather than separate components.

The paper then studies mechanism design for awareness shaping. In a
defensive setting, one may seek interventions that make benign semantic
controls attractive for all sender types, increase the cost of deceptive
controls, or alter the receiver population so that deceptive controls
are less likely to be accepted. These interventions can be interpreted
as changes in detection thresholds, control costs, feature sets, or the
effective composition of awareness types. The resulting model provides a
language-based view of incentive design in agentic AI security.

The case study focuses on AI-enabled phishing, but the modeling idea is
broader. Many LLM-mediated interactions have the same structure: a
strategic party uses language to shape another party's belief or action,
and the receiver responds through a bounded and context-dependent
representation of the message. Examples include persuasion, automated
customer support, moderation-sensitive communication, misinformation,
and multi-agent coordination. In each case, semantic controls affect a
distribution over messages, while receiver awareness determines which
features of those messages enter the decision process.

\paragraph{Contributions.}
The paper makes several contributions. First, it formulates an LLM
semantic signaling game in which semantic controls induce stochastic
token sequences and awareness-dependent score distributions. Second, it
models receiver awareness as a type that governs token-level feature
extraction, which enables a formal treatment of heterogeneous detection
and systematic blindness. Third, it derives Gaussian approximations for
aggregate message scores and uses them to obtain likelihood-ratio and
threshold decision rules. Fourth, it characterizes Perfect Bayesian Nash
equilibria and incentive-compatible control selection for direct
generative mechanisms. Finally, it develops conditions for benign
pooling, population shaping, and mindset dynamics, and supports these
conditions with reproducible phishing experiments that validate the score
approximation, compare receiver-awareness types, and quantify
mechanism-design levers.

\paragraph{Organization.}
Section~\ref{sec:related-work} reviews related work. Section~\ref{sec:llm-game}
introduces the semantic signaling model, including stochastic message
generation and awareness-dependent scoring. Section~\ref{sec:pbne}
develops the equilibrium and incentive-compatibility analysis.
Section~\ref{sec:case} presents a numerical phishing case study.
Section~\ref{sec:conclusion} concludes.

\section{Related Work}\label{sec:related-work}

\paragraph{Signaling, cheap talk, and equilibrium refinements.}
The paper builds on classical models of signaling and strategic
communication. Spence's signaling model formalizes how costly signals
can transmit information under asymmetric information
\cite{spence1973job}. Crawford and Sobel's cheap-talk model studies
strategic communication when messages are payoff-relevant but not
directly binding \cite{crawford1982strategic}, and cheap-talk theory is
surveyed in \cite{farrell1996cheap}. Refinements and equilibrium
selection in signaling games are developed in
\cite{cho1987signaling,banks1987equilibrium}. Deception and lying in
games are surveyed and extended in \cite{sobel2020lying}. Our framework
keeps the asymmetric-information structure of these models but replaces
abstract messages with LLM-generated token sequences controlled by
semantic prompts.

\paragraph{Information design and persuasion.}
The mechanism-design part of the paper is related to information design,
Bayesian persuasion, and strategic disclosure. Bayesian persuasion shows
how an informed or committed sender can shape receiver beliefs by
choosing an information structure \cite{kamenica2011bayesian}; broader
unified treatments of information design appear in
\cite{bergemann2019information}. Work on overt persuasion and covert
signaling in cyber settings further emphasizes that the value of
communication depends on both information and observability
\cite{Li2023PriceTransparency}. In contrast with standard persuasion
models, the present model does not assume that the sender directly
selects a signal structure observed by a fully specified Bayesian
receiver. Instead, the sender selects semantic controls that induce a
distribution over natural-language messages, and receivers process those
messages through awareness-dependent feature maps.

\paragraph{Language, meaning, and conventions in games.}
Language has long been recognized as more than a neutral carrier of
signals. Conventions give messages meaning in coordination problems
\cite{lewis1969convention}, and evolutionary models show how meanings
can emerge through repeated interaction \cite{demichelis2008language}.
Interactive epistemology studies how agents reason about knowledge and
beliefs in games \cite{aumann1999interactive}. These works motivate a
view of language as a structured object with social and epistemic
content. The LLM setting adds a new layer: meanings are implemented
through probabilistic token generation, and the receiver's awareness
determines which parts of the linguistic object become decision-relevant.

\paragraph{Generative language models and agentic systems.}
The technical basis for modern LLMs includes the transformer
architecture \cite{vaswani2017attention}, few-shot language modeling
\cite{brown2020language}, instruction-following through human feedback
\cite{ouyang2022training}, and the broader foundation-model paradigm
\cite{bommasani2021opportunities}. Recent work studies LLMs as
strategic or interactive agents in games and agentic workflows.
Repeated-game behavior with LLM agents is examined in
\cite{akata2025playing}, and broader connections between LLMs and games
are surveyed in \cite{gallotta2024llm_games}. Generative agents show how
LLMs can support memory, planning, and social interaction in simulated
environments \cite{park2023generative}, while agentic AI workflows and
incentive-compatible teaming are studied in \cite{yang2026agentic_ai}.
Related work on LLM-based equilibrium reasoning and agentic AI security
appears in
\cite{zhu2025llm_nash,zhu2025llm,zhu2025generative,zhu2025cybersecurity}.
Our contribution is an explicit signaling-game model in which LLM
generation appears as a stochastic channel governed by semantic control.

\paragraph{Cyber deception, phishing, and cognitive security.}
Cybersecurity applications motivate the model because adversaries often
communicate strategically under asymmetric information. Game-theoretic
models of cyber deception and evidence-based signaling are developed in
\cite{Zhang2018HypothesisTestingGame,Pawlick2019LeakyDeception,
Hu2024GameTheoreticNP}. Phishing and one-to-many deception are studied
in \cite{pawlick2017phishing}, and social phishing demonstrates how
contextual personalization can increase attack effectiveness
\cite{jagatic2007social}. Attention enhancement and cognitive defenses
for phishing prevention are developed in
\cite{huang2022advert,huang2023cognitive}. Our work differs by modeling
the linguistic channel directly: phishing is not only an action chosen by
an attacker, but a distribution over generated messages shaped by
semantic controls and filtered through receiver awareness.

\paragraph{Statistical detection, awareness, and bounded rationality.}
The receiver's decision problem is closely related to hypothesis testing,
information theory, and misspecified inference. Classical
information-theoretic tools are summarized in \cite{cover1999elements},
and misspecification in statistical inference is studied in
\cite{white1982maximum}. The martingale central limit theorem provides
the probabilistic basis for the Gaussian score approximation used here
\cite{hall2014martingale}. The notions of systematic blindness and
mindset dynamics connect these statistical ideas to bounded rationality
\cite{simon1957models}: decision quality depends not only on thresholds
or priors, but also on which linguistic features are represented.

\section{LLM Signaling Game with Semantic Control and Receiver Awareness}
\label{sec:llm-game}

We now formalize the semantic signaling model. The sender controls the
high-level meaning of an LLM-generated message through a semantic input,
while receivers interpret the realized token sequence through
awareness-dependent detection systems. The model captures persuasion,
screening, and deception effects that arise when receivers differ in the
features they can perceive and score.

The communication process follows the transformation pipeline

\[
\theta
\;\xrightarrow{\mu}\;
y
\;\xrightarrow{\text{LLM}}\;
m
\;\xrightarrow{\Psi_\alpha}\;
S_\alpha
\;\xrightarrow{\delta_\alpha}\;
a ,
\]

where $\theta$ denotes the sender's hidden type, $y$ is the semantic
control selected by the sender, $m$ is the generated message, $S_\alpha$
is the score assigned by a receiver with awareness level $\alpha$, and
$a$ is the receiver's action.

\subsection{Sender Types and Semantic Control}

Let $\Theta$ be a finite set of sender types. The sender possesses a
private type $\theta \in \Theta$, which is not directly observable by
the receiver. The receiver begins the interaction with a prior belief
$\rho \in \Delta(\Theta)$, where $\Delta(\Theta)$ denotes the set of
probability distributions over $\Theta$. 
The sender's type $\theta$ encodes latent characteristics that influence
the communication strategy. Depending on the application, $\theta$ may
represent the sender's intent (e.g., benign or adversarial), the
reliability of the underlying information source, ideological position,
or the presence of strategic manipulation. In adversarial environments,
such as phishing or misinformation campaigns, the type space may
distinguish between legitimate and malicious senders; for example,
$\Theta=\{\theta^{\mathrm{legit}},\theta^{\mathrm{phish}}\}$, where
$\theta^{\mathrm{legit}}$ represents a legitimate communicator and
$\theta^{\mathrm{phish}}$ a malicious actor.

A key feature of language-model-mediated communication is that the
sender does not directly choose the realized message $m\in V^L$.
Instead, the sender selects a \emph{semantic control} that guides the
generation process. We therefore introduce a control variable
$y \in \mathcal{Y}$, where $\mathcal{Y}$ denotes the set of feasible
semantic instructions. The variable $y$ can be interpreted as a prompt,
instruction, or high-level specification of the intended message.

The semantic control $y$ influences the distribution of generated
messages by shaping linguistic attributes such as tone, framing,
persuasive intensity, topical emphasis, and stylistic structure. In
practice, $y$ corresponds to a prompt engineering strategy used to
steer the output of a language model. 
This interpretation is particularly natural in phishing settings. A
malicious sender may select a control $y$ that induces messages
emphasizing urgency (e.g., ``your account will be suspended within
24 hours''), authority (e.g., ``this message is from the bank's security
department''), or required action (e.g., ``verify your account
immediately''). Thus, $y$ determines a family of message distributions
that amplify specific persuasive or deceptive cues.

Because the sender's type determines the intended semantic content, the
sender's strategy is a mapping from types to semantic controls.

\begin{definition}[Semantic policy]
A sender (pure) strategy is a mapping $\mu:\Theta\to\mathcal{Y}$ that
assigns a semantic control $y=\mu(\theta)$ to each type $\theta\in\Theta$.
\end{definition}

Given a semantic control $y$, the language model induces a probability
distribution over messages, denoted by $\mathbb{P}(\cdot \mid y)$ on
$V^L$. Consequently, the sender does not control individual tokens
directly; rather, the sender selects $y$ to influence the induced
distribution of messages and, through it, the statistical properties of
the score $S_\alpha$.

This separation between type and message is central to the model. The
private type $\theta$ determines the semantic intent $y=\mu(\theta)$,
while the language model maps $y$ into a stochastic realization
$m\sim\mathbb{P}(\cdot\mid y)$. Thus, strategic control operates at the
level of distributions rather than individual token realizations.

 \subsection{LLM Message Generation}

Let $V$ be a finite vocabulary. A message of length $L \in \mathbb{N}$ is
an element $m=(w_1,\dots,w_L)\in V^L$. We model the generation of a
message as a stochastic process on the measurable space
$(V^L, \mathcal{F})$, where $\mathcal{F}$ is the $\sigma$-algebra
generated by cylinder sets.

The language model generates tokens sequentially via an autoregressive
mechanism. Let $\mathcal{Y}$ denote the set of admissible semantic
controls. For each $y \in \mathcal{Y}$, the model induces a sequence of
stochastic kernels
\[
\pi_y(\cdot \mid h_k) : V^{k-1} \to \Delta(V),
\]
where $h_k=(w_1,\dots,w_{k-1})\in V^{k-1}$ denotes the history prior to
stage $k$, and $\Delta(V)$ denotes the set of probability measures on
$V$. The quantity $\pi_y(w \mid h_k)$ represents the conditional
probability of generating token $w \in V$ given history $h_k$ and
control $y$.

Let $\{W_k\}_{k=1}^L$ denote the generated token sequence on
$(\Omega,\mathcal{G},\mathbb{P}_y)$. The process is adapted to the
natural filtration $F_k=\sigma(W_1,\dots,W_k)$, and its conditional law
satisfies
\begin{equation}
\mathbb{P}_y\!\left(W_k = w \mid F_{k-1}\right)
=
\pi_y(w \mid W_1,\dots,W_{k-1}),
\quad w \in V.
\label{eq:llm_kernel_formal}
\end{equation}

Equation \eqref{eq:llm_kernel_formal} specifies a controlled
non-homogeneous Markov process whose transition probabilities are
parametrized by $y$ and depend on the full history. 
By the Ionescu–Tulcea theorem, the sequence of stochastic kernels
$\{\pi_y(\cdot \mid h_k)\}_{k=1}^L$ uniquely induces a probability
measure $\mathbb{P}_y$ on $(V^L,\mathcal{F})$. The probability of
generating a message $m=(w_1,\dots,w_L)$ under control $y$ is therefore
given by
\begin{equation}
\mathbb{P}_y(m)
=
\prod_{k=1}^L \pi_y(w_k \mid w_1,\dots,w_{k-1}).
\label{eq:llm_chain_rule_formal}
\end{equation}

Thus, each semantic control $y$ induces a probability measure
$\mathbb{P}_y \in \Delta(V^L)$ over the message space. 
The semantic control $y$ acts as a parameter of the stochastic kernels
and therefore determines the law of the entire process
$\{W_k\}_{k=1}^L$. In particular, $y$ influences the distribution of any
measurable functional of the generated message. Of central importance
is the score
$S_\alpha = \Psi_\alpha(m)$,
which can be written as a function of the process
$\{W_k\}_{k=1}^L$.

Consequently, the language model can be interpreted as a stochastic
channel mapping the control variable $y$ into a probability measure
$\mathbb{P}_y$ on $V^L$, and hence into a distribution of scores
$S_\alpha$. The sender does not directly select a realization
$m \in V^L$, but instead chooses $y$ to control the induced measure
$\mathbb{P}_y$.

\subsection{Receiver Awareness as Type}

Receivers differ in their ability to interpret and evaluate generated
messages. Some are highly sensitive to linguistic patterns indicative of
manipulation or risk, while others may fail to detect such signals.
We formalize this heterogeneity by introducing an \emph{awareness type}
$\alpha \in \mathcal{A}$, which parametrizes the receiver's information
processing and detection capability.

The variable $\alpha$ is interpreted as the private type of the receiver. It captures both
cognitive and algorithmic characteristics that determine how a message
is evaluated. For human receivers, $\alpha$ can encode prior experience,
training, or familiarity with common attack patterns. For automated
systems, $\alpha$ may represent the structure or sensitivity of a
detection algorithm.

To model population-level heterogeneity, we assume that the awareness
type is drawn from a distribution $F \in \Delta(\mathcal{A})$, and write
$\alpha \sim F$. The measure $F$ characterizes the composition of the
receiver population and determines the prevalence of different
evaluation behaviors.

\begin{example}[Keyword-based phishing detection]

A concrete and analytically tractable representation arises in the
context of phishing detection. Consider receivers who evaluate messages
by screening for the presence of suspicious keywords. Let
$K_\alpha \subset V$ denote a set of alert tokens associated with
awareness type $\alpha$. Upon receiving a message
$m=(w_1,\dots,w_L)$, the receiver inspects whether tokens from
$K_\alpha$ appear in the message.

Different awareness types correspond to different keyword sets, which
reflect the receiver's ability to recognize increasingly subtle and
context-dependent indicators of malicious intent. A natural
specification is to consider a nested family of sets
$\{K_\alpha\}_{\alpha \in \mathcal{A}}$ such that higher awareness
corresponds to a richer and more discriminative set of tokens.

A low-awareness receiver $\alpha_{\mathrm{low}}$ may rely only on the
most explicit credential-related cues,
\[
K_{\alpha_{\mathrm{low}}}
=
\{\text{``password''},\ \text{``login''},\ \text{``verify''},\ \text{``PIN''}\},
\]
which capture direct attempts to solicit sensitive information.

An intermediate-awareness receiver $\alpha_{\mathrm{mid}}$ augments this
set with operational, urgency, and security-related tokens,
\[
K_{\alpha_{\mathrm{mid}}}
=
K_{\alpha_{\mathrm{low}}}
\cup
\{\text{``account''},\ \text{``urgent''},\ \text{``security''},\ \text{``update''},\ \text{``alert''},\ \text{``suspended''}\},
\]
which capture common patterns used to induce time pressure and
legitimacy.

A highly aware receiver $\alpha_{\mathrm{high}}$ further expands the set
to include more subtle transactional, behavioral, and interaction cues,
\[
\begin{aligned}
K_{\alpha_{\mathrm{high}}}
=
K_{\alpha_{\mathrm{mid}}}
\cup
\{&
\text{``transfer''},\ \text{``wire''},\ \text{``payment''},\ \text{``confirm''},\\
&\text{``click''},\ \text{``link''},\ \text{``reset''},\ \text{``authentication''},\\
&\text{``invoice''},\ \text{``bank''},\ \text{``verify identity''}
\}.
\end{aligned}
\]

Thus, the family $\{K_\alpha\}$ satisfies a monotonicity property
$K_{\alpha_1} \subseteq K_{\alpha_2}$ whenever
$\alpha_1 \le \alpha_2$, reflecting that more aware receivers monitor a
strictly richer set of linguistic signals, including both explicit and
contextual indicators of phishing. 
Given a message $m$, a simple detection mechanism is the indicator
\(
\mathbf{1}\!\left\{\exists\, k \le L \text{ such that } w_k \in K_\alpha\right\},
\)
which triggers an alert whenever at least one suspicious token appears.
More generally, the receiver may assign heterogeneous weights to tokens
in $K_\alpha$, leading to the additive scoring representation that will be introduced
in Section~\ref{sec:scoring}.
\end{example}
 
\subsection{Receiver Scoring Systems}\label{sec:scoring}

Each receiver evaluates incoming messages using an internal scoring
mechanism. We represent this mechanism by a receiver-specific operator
$\Psi_\alpha: V^L \to \mathbb{R}$, which maps a message
$m \in V^L$ to a scalar signal $S_\alpha = \Psi_\alpha(m)$. The score
$S_\alpha$ represents the receiver's internal assessment of the message.
Depending on the application, this signal may encode moderation risk,
toxicity or safety indicators, ideological intensity, likelihood of
phishing, or more generally a measure of credibility or trustworthiness.

The mapping $\Psi_\alpha$ depends on the receiver's awareness type
$\alpha \in \mathcal{A}$. Thus, heterogeneity in awareness induces
heterogeneity in scoring rules across the population. In particular,
different receivers may emphasize different linguistic features when
evaluating the same message, leading to heterogeneous score realizations
for a fixed $m \in V^L$.

A tractable representation of $\Psi_\alpha$ is obtained by decomposing
the evaluation process at the token level.

\begin{definition}[Token scoring function]
For each $\alpha \in \mathcal{A}$, let
$\psi_\alpha: V \to \mathbb{R}$ be a measurable function that assigns a
score to each token. We assume that $\psi_\alpha$ is uniformly bounded,
i.e., there exists a constant $C>0$ such that
$|\psi_\alpha(w)| \le C$ for all $w \in V$ and all $\alpha \in \mathcal{A}$.
\end{definition}

Given a message $m=(w_1,\dots,w_L)$, the induced score is
\begin{equation}
S_\alpha(m)
=
\sum_{k=1}^L \psi_\alpha(w_k),
\label{eq:token_score}
\end{equation}
so that each token contributes an additive increment to the overall
evaluation.

\begin{example}[Connection to keyword-based detection.]

The token scoring representation generalizes the keyword-based detection
mechanism introduced in the previous subsection. In particular, for a
given awareness type $\alpha$ with keyword set $K_\alpha \subset V$, a
simple detection rule can be written as
\[
\psi_\alpha(w) = \mathbf{1}\{w \in K_\alpha\},
\]
in which case $S_\alpha(m)$ counts the number of suspicious tokens in
the message. More generally, different tokens may be assigned
heterogeneous weights, allowing for graded sensitivity to different
linguistic cues.

For example, a highly aware receiver $\alpha_{\mathrm{high}}$ may assign
large weights to tokens such as ``password'', ``verify'', ``account'',
``urgent'', or ``transfer'', while a less aware receiver
$\alpha_{\mathrm{low}}$ assigns smaller weights or ignores some of these
tokens altogether. Consequently, for the same message $m$, one may have
$S_{\alpha_{\mathrm{high}}}(m) \gg S_{\alpha_{\mathrm{low}}}(m)$,
indicating stronger detection capability.
\end{example}

The function $\psi_\alpha$ can be interpreted as a feature extraction or
filtering mechanism that maps linguistic inputs to risk-relevant
signals. Receivers with higher awareness apply more sensitive filters 
that amplify suspicious patterns in the text, whereas receivers with
lower awareness apply weaker filters and may fail to detect such
patterns. This formulation captures heterogeneity in both human
judgment and algorithmic classification systems within a unified
mathematical framework.

\subsection{Gaussian Approximation of Message Scores}

Recall from \eqref{eq:token_score} that the token-level score is
$X_{k,\alpha} = \psi_\alpha(W_k)$, where $W_k$ denotes the token
generated at stage $k$. The cumulative score is
\[
S_{\alpha,L} = \sum_{k=1}^{L} X_{k,\alpha}.
\]

Fix a semantic control $y \in \mathcal{Y}$ and let
$(\Omega,\mathcal{F},\mathbb{P}_y)$ denote the probability space
induced by the language model under $y$. The token sequence
$\{W_k\}_{k=1}^L$ is adapted to the natural filtration
$F_k = \sigma(W_1,\dots,W_k)$.

\paragraph{Martingale decomposition.}

Define the predictable process
$m_{k,\alpha} := \mathbb{E}_y[X_{k,\alpha} \mid F_{k-1}]$
and the martingale difference sequence
$Y_{k,\alpha} := X_{k,\alpha} - m_{k,\alpha}$. Then
$\{Y_{k,\alpha}, F_k\}$ is a martingale difference sequence satisfying
$\mathbb{E}_y[Y_{k,\alpha} \mid F_{k-1}] = 0$. Consequently, the cumulative score admits the decomposition
\begin{equation}
S_{\alpha,L}
=
\sum_{k=1}^{L} m_{k,\alpha}
+
M_{\alpha,L},
\label{eq:score_decomposition}
\end{equation}
where $M_{\alpha,L} := \sum_{k=1}^{L} Y_{k,\alpha}$ is a martingale.

\paragraph{Predictable quadratic variation.}

Let $\sigma_{k,\alpha}^2$ denote
$\mathbb{E}_y[Y_{k,\alpha}^2 \mid F_{k-1}]$. The predictable quadratic
variation is
\begin{equation}
\langle M_\alpha \rangle_L
=
\sum_{k=1}^{L} \sigma_{k,\alpha}^2.
\label{eq:predictable_variation}
\end{equation}

\paragraph{Regularity conditions.}

We impose the following conditions:

\begin{enumerate}
\item[(A1)] \textbf{Bounded increments:} There exists $C>0$ such that
$|\psi_\alpha(w)| \le C$ for all $w \in V$.

\item[(A2)] \textbf{Variance growth:}
$\langle M_\alpha \rangle_L \to \infty$ in probability as $L \to \infty$.

\item[(A3)] \textbf{Conditional Lindeberg condition:} For every $\varepsilon > 0$,
\[
\frac{1}{\langle M_\alpha \rangle_L}
\sum_{k=1}^L
\mathbb{E}_y\!\left[
Y_{k,\alpha}^2 \mathbf{1}\{|Y_{k,\alpha}| > \varepsilon \sqrt{\langle M_\alpha \rangle_L}\}
\,\middle|\, F_{k-1}
\right]
\;\xrightarrow{P}\; 0.
\]
\end{enumerate}

Condition (A3) is automatically satisfied under (A1) since the increments
are uniformly bounded. Under conditions (A1)--(A3), a martingale central limit theorem (e.g., see \cite{hall2014martingale}) implies
\begin{equation}
\frac{M_{\alpha,L}}
{\sqrt{\langle M_\alpha \rangle_L}}
\;\Rightarrow\;
N(0,1).
\label{eq:martingale_clt}
\end{equation}

\paragraph{Asymptotic normality of the score.}

Let $\bar{s}_\alpha(y) := \mathbb{E}_y[S_{\alpha,L}]$. From
\eqref{eq:score_decomposition}, we have
$S_{\alpha,L} = \bar{s}_\alpha(y) + M_{\alpha,L}$. If, in addition,
\begin{equation}
\mathrm{Var}_y(S_{\alpha,L})
\sim
\mathbb{E}_y[\langle M_\alpha \rangle_L],
\label{eq:variance_equivalence}
\end{equation}
then
\begin{equation}
\frac{S_{\alpha,L} - \bar{s}_\alpha(y)}
{\sqrt{\mathrm{Var}_y(S_{\alpha,L})}}
\;\Rightarrow\;
N(0,1).
\label{eq:score_clt}
\end{equation}

\begin{proposition}[Gaussian approximation of message scores]
\label{prop:gaussian_scores}
\label{prop:score_clt}

Under conditions (A1)--(A3) and \eqref{eq:variance_equivalence}, the
cumulative score $S_{\alpha,L}$ satisfies
\begin{equation}
\frac{S_{\alpha,L} - \bar{s}_\alpha(y)}
{\sigma_\alpha(y)}
\;\Rightarrow\;
N(0,1),
\end{equation}
where $\bar{s}_\alpha(y) = \mathbb{E}_y[S_{\alpha,L}]$ and
$\sigma_\alpha^2(y) = \mathrm{Var}_y(S_{\alpha,L})$. 
Consequently, for sufficiently large $L$,
\begin{equation}
S_{\alpha,L}
\approx
N\!\left(\bar{s}_\alpha(y),\, \sigma_\alpha^2(y)\right).
\label{eq:gaussian_approximation}
\end{equation}

\end{proposition}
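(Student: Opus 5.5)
The plan is to reduce the statement to the martingale CLT \eqref{eq:martingale_clt}, which is already available under (A1)--(A3), and then replace the random normalization $\sqrt{\langle M_\alpha\rangle_L}$ by the deterministic one $\sigma_\alpha(y)$ via Slutsky's lemma. Write $A_{\alpha,L}:=\sum_{k=1}^L m_{k,\alpha}$ for the predictable compensator. Then \eqref{eq:score_decomposition} gives
\[
S_{\alpha,L}-\bar s_\alpha(y) \;=\; \bigl(A_{\alpha,L}-\mathbb{E}_y[A_{\alpha,L}]\bigr) \;+\; M_{\alpha,L},
\]
using $\mathbb{E}_y[M_{\alpha,L}]=0$. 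In the reading adopted in the text, $S_{\alpha,L}=\bar s_\alpha(y)+M_{\alpha,L}$, so the compensator is deterministic and equals its mean, and the first bracket vanishes. I will first carry out the argument in that case and then indicate what is needed in general.

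\textbf{Step 1 (martingale part).} Since the increments $Y_{k,\alpha}$ are bounded by $2C$ by (A1), and $\langle M_\alpha\rangle_L\to\infty$ by (A2), the conditional Lindeberg condition (A3) holds trivially: for large $L$ the indicator $\mathbf 1\{|Y_{k,\alpha}|>\varepsilon\sqrt{\langle M_\alpha\rangle_L}\}$ is identically zero. Hence \eqref{eq:martingale_clt} gives
\[
M_{\alpha,L}/\sqrt{\langle M_\alpha\rangle_L}\Rightarrow N(0,1).
\]

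\textbf{Step 2 (change of normalization).} Write
\[
\frac{M_{\alpha,L}}{\sigma_\alpha(y)}=\frac{M_{\alpha,L}}{\sqrt{\langle M_\alpha\rangle_L}}\cdot\sqrt{\frac{\langle M_\alpha\rangle_L}{\sigma_\alpha^2(y)}}.
\]
By Slutsky's lemma it then suffices to show $\langle M_\alpha\rangle_L/\sigma_\alpha^2(y)\to 1$ in probability. Condition \eqref{eq:variance_equivalence} gives this only in mean: $\mathbb{E}_y\langle M_\alpha\rangle_L/\sigma^2_\alpha(y)\to1$. To upgrade it, I will use that $\langle M_\alpha\rangle_L$ is a sum of bounded predictable terms, and show its relative fluctuation vanishes, i.e.\ $\mathrm{Var}_y(\langle M_\alpha\rangle_L)=o\bigl((\mathbb{E}_y\langle M_\alpha\rangle_L)^2\bigr)$. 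Chebyshev's inequality then gives convergence in probability.

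Alternatively, I will invoke the version of the martingale CLT in \cite{hall2014martingale} that normalizes by a deterministic sequence $s_L^2$ and requires $\langle M_\alpha\rangle_L/s_L^2\to_P 1$. I will take $s_L^2=\sigma_\alpha^2(y)$ and treat this stabilization as the precise content of \eqref{eq:variance_equivalence} together with (A2).

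\textbf{Step 3 (compensator fluctuation).} In the general case I must show
\[
\bigl(A_{\alpha,L}-\mathbb{E}_yA_{\alpha,L}\bigr)/\sigma_\alpha(y)\to_P0,
\]
after which Slutsky's lemma combines Steps 1--3 into the claimed limit. The tool here is the variance identity
\[
\mathrm{Var}_y(S_{\alpha,L})=\mathrm{Var}_y(A_{\alpha,L})+\mathbb{E}_y\langle M_\alpha\rangle_L+2\,\mathrm{Cov}_y(A_{\alpha,L},M_{\alpha,L}).
\]
When the cross term is controlled, \eqref{eq:variance_equivalence} forces $\mathrm{Var}_y(A_{\alpha,L})=o(\sigma_\alpha^2(y))$, and Chebyshev's inequality gives the claim. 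The approximation \eqref{eq:gaussian_approximation} is then just an informal restatement of the weak limit.

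\textbf{Main obstacle.} I expect the main obstacle to be Steps 2--3 together, namely showing that the history-dependent quantities $A_{\alpha,L}$ and $\langle M_\alpha\rangle_L$ concentrate around their means. These quantities come from the autoregressive kernels $\pi_y$, so concentration is not automatic. \eqref{eq:variance_equivalence} alone only matches expectations, and by Cauchy--Schwarz the cross term $\mathrm{Cov}_y(A_{\alpha,L},M_{\alpha,L})$ need not be negligible relative to $\mathrm{Var}_y(A_{\alpha,L})$. My first attempt will be to adopt the paper's convention $S_{\alpha,L}=\bar s_\alpha(y)+M_{\alpha,L}$, i.e.\ a deterministic compensator, and strengthen \eqref{eq:variance_equivalence} to $\langle M_\alpha\rangle_L/\sigma_\alpha^2(y)\to_P1$. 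If a more intrinsic argument is wanted, I would instead impose a mixing or forgetting condition on $\pi_y$ that yields $\mathrm{Var}_y(A_{\alpha,L})=O(L)$ while $\sigma_\alpha^2(y)$ grows linearly in $L$. Even then the cross term must be bounded separately, since $\mathrm{Var}_y(A_{\alpha,L})=O(L)$ is not by itself $o(\sigma_\alpha^2(y))$. So I would explicitly add $\mathrm{Var}_y(A_{\alpha,L})=o(\sigma_\alpha^2(y))$ as a hypothesis alongside the ratio condition.
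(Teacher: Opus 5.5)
Your route is the paper's route: the decomposition \eqref{eq:score_decomposition}, the martingale CLT \eqref{eq:martingale_clt}, and a change of normalization from $\sqrt{\langle M_\alpha\rangle_L}$ to $\sigma_\alpha(y)$ justified by \eqref{eq:variance_equivalence}. The paper skips over exactly the two points you isolate. First, it asserts $S_{\alpha,L}=\bar s_\alpha(y)+M_{\alpha,L}$ as if this followed from the decomposition, but that holds only when $\sum_k m_{k,\alpha}$ is a.s.\ constant. Second, it treats equivalence in mean of $\mathrm{Var}_y(S_{\alpha,L})$ and $\mathbb{E}_y\langle M_\alpha\rangle_L$ as enough to swap normalizations. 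Your caution is justified, and the strengthening is not optional: the statement as written is false even with a deterministic compensator.

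Here is a counterexample. Let the first token pick one of two regimes with probability $1/2$, with both regime tokens scored $0$. Let later tokens be i.i.d.\ given the regime: scores $\pm1$ equally likely in the first regime, and in the second regime $\pm1$ with probability $p/2$ each and $0$ otherwise, $0<p<1$. Every $m_{k,\alpha}$ vanishes and (A1)--(A3) hold. Moreover $\mathrm{Var}_y(S_{\alpha,L})=\mathbb{E}_y\langle M_\alpha\rangle_L=(L-1)(1+p)/2$ exactly. Yet $S_{\alpha,L}/\sigma_\alpha(y)$ converges to the non-Gaussian scale mixture $\tfrac12N(0,\tfrac{2}{1+p})+\tfrac12N(0,\tfrac{2p}{1+p})$.

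The same example shows that the first idea in your Step 2 cannot work. Here $\langle M_\alpha\rangle_L/\mathbb{E}_y\langle M_\alpha\rangle_L$ takes the values $2/(1+p)$ and $2p/(1+p)$, so bounded predictable terms give no concentration under history-dependent kernels. Drop that sub-plan. Also do not present $\langle M_\alpha\rangle_L/\sigma_\alpha^2(y)\xrightarrow{P}1$ as a reading of \eqref{eq:variance_equivalence}; state it as an added hypothesis. Pair it with either a deterministic compensator or $\mathrm{Var}_y\bigl(\sum_k m_{k,\alpha}\bigr)=o(\sigma_\alpha^2(y))$. In the latter case Chebyshev removes the compensator directly and the cross term never enters.

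Under that hypothesis, apply the deterministic-normalization martingale CLT of \cite{hall2014martingale} to $Y_{k,\alpha}/\sigma_\alpha(y)$; this gives the result immediately. The conditional Lindeberg condition is trivial because $|Y_{k,\alpha}|\le 2C$ and $\sigma_\alpha(y)\to\infty$, the latter following from the ratio hypothesis and (A2).

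Step 1 then becomes unnecessary, which is just as well. The self-normalized limit \eqref{eq:martingale_clt} is not supplied by (A1)--(A3) alone either. The martingale CLTs in \cite{hall2014martingale} that deliver it also require a stabilization condition on $\langle M_\alpha\rangle_L$, which none of the stated hypotheses provides.
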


Proposition~\ref{prop:gaussian_scores} shows that the cumulative score
induced by the language model behaves asymptotically as a Gaussian
signal whose mean and variance depend on the semantic control $y$ and
the receiver type $\alpha$. The sender, through the choice of $y$,
controls the location of the distribution, while the stochastic nature
of token generation determines its dispersion. This Gaussian structure
forms the basis for the receiver's inference problem and the equilibrium
analysis developed in subsequent sections.

\subsection{Receiver Beliefs, Utility, and Decision Rules}

After observing the generated message $m$ through its score
$S_\alpha=\Psi_\alpha(m)$, the receiver updates its belief about the
sender's type and selects an action. 
Let $\rho \in \Delta(\Theta)$ denote the prior receiver and sender
types. For a fixed semantic control $y \in \mathcal{Y}$ and awareness
level $\alpha$, the posterior belief is given by Bayes' rule:
\begin{equation}
\rho(\theta \mid s, y, \alpha)
=
\frac{\rho(\theta)\, f_{\alpha,\theta}(s \mid y)}
{\sum_{\theta' \in \Theta} \rho(\theta')\, f_{\alpha,\theta'}(s \mid y)},
\label{eq:posterior_explicit}
\end{equation}
where $s=S_\alpha$ and $f_{\alpha,\theta}(\cdot \mid y)$ denotes the
density of the score under type $\theta$. 
Under Proposition~\ref{prop:gaussian_scores}, the score satisfies the
Gaussian approximation
\begin{equation}
S_{\alpha,L} \mid (\theta,y)
\approx
N\!\left(\bar{s}_{\alpha,\theta}(y),\, \sigma_{\alpha,\theta}^2(y)\right),
\label{eq:score_distribution}
\end{equation}
and hence the likelihood function is
\begin{equation}
f_{\alpha,\theta}(s \mid y)
=
\frac{1}{\sqrt{2\pi \sigma_{\alpha,\theta}^2(y)}}
\exp\!\left(
-\frac{\big(s-\bar{s}_{\alpha,\theta}(y)\big)^2}
{2\sigma_{\alpha,\theta}^2(y)}
\right).
\label{eq:likelihood}
\end{equation}

Let $U_R : A \times \Theta \to \mathbb{R}$ denote the receiver's utility
function. Given posterior beliefs \eqref{eq:posterior_explicit}, the
expected utility of action $a \in A$ is
\begin{equation}
\mathbb{E} U_R(a \mid s, y, \alpha)
=
\sum_{\theta \in \Theta}
\rho(\theta \mid s, y, \alpha)\, U_R(a,\theta).
\label{eq:expected_utility}
\end{equation}

The receiver's strategy is a measurable mapping
$\delta_\alpha : \mathbb{R} \to A$ such that
\begin{equation}
\delta_\alpha(s)
\in
\arg\max_{a\in A}
\mathbb{E} U_R(a \mid s, y, \alpha).
\label{eq:optimal_decision}
\end{equation}

Consider $A=\{\text{allow},\text{block}\}$ and
$\Theta=\{\theta^{\text{legit}},\theta^{\text{mal}}\}$. Define the
likelihood ratio
\begin{equation}
\Lambda(s)
:=
\frac{f_{\alpha,\theta^{\text{mal}}}(s \mid y)}
{f_{\alpha,\theta^{\text{legit}}}(s \mid y)}.
\label{eq:likelihood_ratio}
\end{equation}

Using \eqref{eq:posterior_explicit}, the posterior odds ratio satisfies
\begin{equation}
\frac{\rho(\theta^{\text{mal}} \mid s,y,\alpha)}
{\rho(\theta^{\text{legit}} \mid s,y,\alpha)}
=
\frac{\rho(\theta^{\text{mal}})}{\rho(\theta^{\text{legit}})}
\cdot
\Lambda(s).
\label{eq:posterior_odds}
\end{equation}

\begin{proposition}[Threshold optimality of the receiver]
\label{prop:threshold_policy}

Assume $A=\{\text{allow},\text{block}\}$ and that
$U_R(\text{allow},\theta^{\text{legit}})
>
U_R(\text{block},\theta^{\text{legit}})$ and
$U_R(\text{block},\theta^{\text{mal}})
>
U_R(\text{allow},\theta^{\text{mal}})$. 
Then, the optimal decision rule \eqref{eq:optimal_decision} is a
likelihood-ratio test: there exists $\kappa>0$ such that
\begin{equation}
\delta_\alpha(s)
=
\begin{cases}
\text{block}, & \Lambda(s) > \kappa,\\
\text{allow}, & \Lambda(s) \le \kappa,
\end{cases}
\label{eq:lr_rule}
\end{equation}
where
\begin{equation}
\kappa
=
\frac{\rho(\theta^{\text{legit}})}{\rho(\theta^{\text{mal}})}
\cdot
\frac{U_R(\text{allow},\theta^{\text{legit}})
-
U_R(\text{block},\theta^{\text{legit}})}
{U_R(\text{block},\theta^{\text{mal}})
-
U_R(\text{allow},\theta^{\text{mal}})}.
\label{eq:kappa}
\end{equation}

\end{proposition}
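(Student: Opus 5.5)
The argument compares the two actions' posterior expected utilities, as given by \eqref{eq:expected_utility}, and rewrites the comparison in terms of the likelihood ratio \eqref{eq:likelihood_ratio} using the posterior-odds identity \eqref{eq:posterior_odds}. Fix $s$, $y$, $\alpha$, and write $p_\theta := \rho(\theta \mid s,y,\alpha)$. Block is weakly preferred to allow exactly when
\[
p_{\mathrm{mal}}\bigl(U_R(\text{block},\theta^{\text{mal}})-U_R(\text{allow},\theta^{\text{mal}})\bigr)
\;\ge\;
p_{\mathrm{legit}}\bigl(U_R(\text{allow},\theta^{\text{legit}})-U_R(\text{block},\theta^{\text{legit}})\bigr).
\]
Call the two bracketed differences $\Delta_{\mathrm{mal}}$ and $\Delta_{\mathrm{legit}}$. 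By hypothesis both are strictly positive.

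The main step is to divide this inequality by $p_{\mathrm{legit}}\Delta_{\mathrm{mal}}>0$. This gives
\[
\frac{p_{\mathrm{mal}}}{p_{\mathrm{legit}}} \;\ge\; \frac{\Delta_{\mathrm{legit}}}{\Delta_{\mathrm{mal}}}.
\]
Substituting \eqref{eq:posterior_odds} for the left side, and dividing by the prior odds $\rho(\theta^{\text{mal}})/\rho(\theta^{\text{legit}})$, turns this into $\Lambda(s)\ge\kappa$, with $\kappa$ exactly as in \eqref{eq:kappa}. Then $\kappa>0$ because both utility differences are positive and both prior masses are positive.

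The step needing care is positivity of the quantities we divide by. We need $p_{\mathrm{legit}}>0$ and $\rho(\theta^{\text{mal}})>0$, which follow from a full-support prior. We also need the denominator of \eqref{eq:posterior_explicit} to be nonzero, which holds because the Gaussian likelihoods \eqref{eq:likelihood} are strictly positive everywhere. I would state these as implicit standing assumptions. If a prior mass were zero, the rule degenerates to $\kappa\in\{0,\infty\}$.

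Finally, at $\Lambda(s)=\kappa$ the receiver is indifferent, so either action is optimal. Assigning allow on this set is a valid selection from the $\arg\max$ in \eqref{eq:optimal_decision}, which gives \eqref{eq:lr_rule}. Measurability of $\delta_\alpha$ follows because $\Lambda$ is continuous in $s$, so $\{s:\Lambda(s)>\kappa\}$ is Borel.
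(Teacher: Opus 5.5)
Your proposal is correct and follows the paper's own proof: compare the posterior expected utilities of block and allow, rearrange, and use the posterior-odds identity \eqref{eq:posterior_odds} to get $\Lambda(s)\ge\kappa$. Your extra care is a genuine improvement on the paper's terse argument: you make the full-support prior and strictly positive Gaussian likelihoods explicit, and you break ties at $\Lambda(s)=\kappa$ in favour of allow, which reconciles the derived $\ge$ with the strict inequality in \eqref{eq:lr_rule}.
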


\begin{proof}

The receiver prefers \text{block} over \text{allow} if and only if
\begin{equation}
\mathbb{E}U_R(\text{block} \mid s,y,\alpha)
\ge
\mathbb{E}U_R(\text{allow} \mid s,y,\alpha).
\label{eq:decision_condition}
\end{equation}

Using \eqref{eq:expected_utility}, this is equivalent to
\begin{align}
&\rho(\theta^{\text{mal}} \mid s,y,\alpha)
\big(U_R(\text{block},\theta^{\text{mal}})
-
U_R(\text{allow},\theta^{\text{mal}})\big)
\nonumber\\
&\quad\ge
\rho(\theta^{\text{legit}} \mid s,y,\alpha)
\big(U_R(\text{allow},\theta^{\text{legit}})
-
U_R(\text{block},\theta^{\text{legit}})\big).
\end{align} 
Rearranging and applying \eqref{eq:posterior_odds} yields
\(
\Lambda(s) \ge \kappa,
\)
which proves \eqref{eq:lr_rule}.
\end{proof}

Under \eqref{eq:score_distribution}, if
$\sigma_{\alpha,\theta^{\text{mal}}}^2(y)
=
\sigma_{\alpha,\theta^{\text{legit}}}^2(y)$, then
$\log \Lambda(s)$ is affine in $s$, and hence $\Lambda(s)$ is monotone.
Thus there exists a threshold $\eta(\alpha)$ such that
\begin{equation}
\delta_\alpha(s)
=
\begin{cases}
\text{allow}, & s \le \eta(\alpha),\\
\text{block}, & s > \eta(\alpha).
\end{cases}
\label{eq:threshold_rule}
\end{equation}

Under the Gaussian approximation,
\begin{equation}
\mathbb{P}\!\left(S_\alpha \le \eta(\alpha)\mid y\right)
=
\Phi\!\left(
\frac{\eta(\alpha)-\bar{s}_\alpha(y)}
{\sigma_\alpha(y)}
\right),
\label{eq:acceptance_probability}
\end{equation}
where $\Phi$ denotes the standard normal cumulative distribution
function.

\subsection{Sender Payoff and Audience Matching}

The sender's payoff depends on its type and on which receivers accept
the generated message. Because receivers differ in their awareness
levels, the sender interacts with a heterogeneous population. 
Let $\theta \in \Theta$ denote the sender's type and
$\alpha \in \mathcal{A}$ the receiver's awareness level. Let
$F \in \Delta(\mathcal{A})$ denote the population distribution of
receiver types.

\paragraph{Sender valuation.}

When a receiver of awareness level $\alpha \in \mathcal{A}$ accepts a
message generated under semantic control $y \in \mathcal{Y}$ by a sender
of type $\theta \in \Theta$, the sender obtains utility
\begin{equation}
V_S(\theta,y,\alpha),
\label{eq:sender_value}
\end{equation}
representing the type-dependent benefit of a successful interaction. 
To model heterogeneous targeting, let
\begin{equation}
w(\alpha) \ge 0
\label{eq:target_weight}
\end{equation}
be a measurable weighting function over receiver types, capturing their
relative importance.

We adopt the structured specification
\begin{equation}
V_S(\theta,y,\alpha)
=
v_\theta(\alpha)
-
C_\theta(y),
\label{eq:sender_value_structured}
\end{equation}
where $v_\theta:\mathcal{A}\to\mathbb{R}_+$ denotes the value of engaging
a receiver of type $\alpha$, and $C_\theta:\mathcal{Y}\to\mathbb{R}_+$
represents the cost of implementing control $y$.

The function $v_\theta(\alpha)$ encodes targeting preferences and
differs across types. For $\theta^{\mathrm{ben}}$, it is typically
increasing in $\alpha$, reflecting higher value from more aware
receivers. For $\theta^{\mathrm{mal}}$, it is typically decreasing in
$\alpha$, reflecting a preference for less aware, more susceptible
targets. 
The cost $C_\theta(y)$ captures the effort or risk associated with
shaping the signal distribution, including tradeoffs between
persuasiveness and detectability.

The product $w(\alpha)\, v_\theta(\alpha)$ defines the effective weight
assigned to receivers of type $\alpha$. Thus, the sender's objective can
be interpreted as weighting acceptance probabilities by an
\emph{effective targeting measure} over the population.  

\paragraph{Acceptance probability.}

From \eqref{eq:threshold_rule} and \eqref{eq:acceptance_probability},
a receiver with awareness level $\alpha$ accepts the message if
$S_{\alpha,L} \le \eta(\alpha)$. Under the Gaussian approximation, the
acceptance probability conditional on $(\theta,y)$ is
\begin{equation}
\mathbb{P}\!\left(S_{\alpha,L} \le \eta(\alpha)
\mid \theta,y,\alpha\right)
=
\Phi\!\left(
\frac{\eta(\alpha)-\bar{s}_{\alpha,\theta}(y)}
{\sigma_{\alpha,\theta}(y)}
\right),
\label{eq:acceptance_prob_sender}
\end{equation}
where
\[
\bar{s}_{\alpha,\theta}(y)
=
\mathbb{E}\!\left[S_{\alpha,L}\mid \theta,y,\alpha\right],
\qquad
\sigma_{\alpha,\theta}^2(y)
=
\mathrm{Var}\!\left(S_{\alpha,L}\mid \theta,y,\alpha\right).
\]

\paragraph{Expected payoff.}

The sender's expected payoff, conditional on type $\theta$, is given by
\begin{equation}
U_S(\theta,y)
=
\int_{\mathcal{A}}
w(\alpha)\,
V_S(\theta,y,\alpha)\,
\mathbb{P}\!\left(S_{\alpha,L} \le \eta(\alpha)
\mid \theta,y,\alpha\right)
\, dF(\alpha).
\label{eq:sender_payoff}
\end{equation}

Substituting \eqref{eq:acceptance_prob_sender} yields
\begin{equation}
U_S(\theta,y)
=
\int_{\mathcal{A}}
w(\alpha)\,
V_S(\theta,y,\alpha)\,
\Phi\!\left(
\frac{\eta(\alpha)-\bar{s}_{\alpha,\theta}(y)}
{\sigma_{\alpha,\theta}(y)}
\right)
\, dF(\alpha).
\label{eq:sender_payoff_gaussian}
\end{equation}

\paragraph{Sender optimization problem.}

The sender selects a semantic control $y \in \mathcal{Y}$ to maximize
expected payoff. An optimal control $y^*$ satisfies
\begin{equation}
y^*
\in
\arg\max_{y \in \mathcal{Y}}
U_S(\theta,y),
\label{eq:sender_problem}
\end{equation}
for each $\theta \in \Theta$.  
We represent the sender's behavior by a mapping
$\mu:\Theta \to \mathcal{Y}$, where
\begin{equation}
\mu(\theta) := y^*
\label{eq:mu_mapping}
\end{equation}
denotes the semantic control selected by a sender of type $\theta$.

The function $\mu(\theta)$ is a control variable and should not be
confused with the mean of the score distribution. Rather, it induces the
distribution of scores through
\begin{equation}
\bar{s}_{\alpha,\theta}(\mu(\theta))
=
\mathbb{E}\!\left[S_{\alpha,L}\mid \theta,\mu(\theta),\alpha\right],
\label{eq:induced_mean}
\end{equation}
and similarly $\sigma_{\alpha,\theta}(\mu(\theta))$ determines the
dispersion.

Equation~\eqref{eq:sender_problem} defines a type-dependent optimization
problem in which the sender selects a control $\mu(\theta)$ to shape the
distribution of message scores. The functions
$\bar{s}_{\alpha,\theta}(\mu(\theta))$ and
$\sigma_{\alpha,\theta}(\mu(\theta))$ determine, respectively, the
location and dispersion of the score distribution perceived by receivers
of type $\alpha$. The sender uses $\mu(\theta)$ to trade off acceptance
across heterogeneous receiver types. The weighting function $w(\alpha)$
and valuation function $V_S(\theta,\mu(\theta),\alpha)$ encode audience
targeting preferences.

In adversarial settings, such as phishing, a malicious sender
$\theta^{\mathrm{mal}}$ selects $\mu(\theta)$ to increase acceptance
probability among less-aware receivers while avoiding more-aware ones.
This induces a systematic shift of the score
distribution toward regions below the thresholds $\eta(\alpha)$ for
targeted audiences. Consequently, the sender's problem can be
interpreted as an \emph{audience matching problem}: the control policy
$\mu(\theta)$ selects a signal distribution that maximizes expected
payoff under heterogeneous inference and decision rules.

\section{Perfect Bayesian Nash Equilibrium}
\label{sec:pbne}

We now define the equilibrium of the signaling game induced by the
controlled language generation model. A sender of type $\theta \in \Theta$ selects a semantic control
$y \in \mathcal{Y}$. Through the stochastic kernels
\eqref{eq:llm_kernel_formal}, this induces a probability measure
$\mathbb{P}_y$ on the message space $V^L$ via
\eqref{eq:llm_chain_rule_formal}, and hence a distribution over scores
\(
S_\alpha = \Psi_\alpha(m),
\) 
defined in \eqref{eq:token_score}.

The receiver of awareness type $\alpha \in \mathcal{A}$ observes the
realization $s \in \mathbb{R}$ of $S_\alpha$ and selects an action
$a \in A$ according to \eqref{eq:optimal_decision}. Beliefs are updated
using Bayes' rule \eqref{eq:posterior_explicit}, with likelihood given by
\eqref{eq:likelihood} under the Gaussian approximation
\eqref{eq:score_distribution}. 
The sender's payoff is given by \eqref{eq:sender_payoff_gaussian}. We represent the sender's strategy by a measurable mapping
\(
\mu : \Theta \to \mathcal{Y},
\) 
as defined in \eqref{eq:mu_mapping}.

\begin{definition}[Perfect Bayesian Nash Equilibrium]
A \emph{Perfect Bayesian Nash Equilibrium (PBNE)} consists of

\begin{enumerate}
\item[(i)] a sender strategy $\mu^* : \Theta \to \mathcal{Y}$,

\item[(ii)]  a receiver decision rule $\delta_\alpha^* : \mathbb{R} \to A$ for
each $\alpha \in \mathcal{A}$,

\item[(iii)]  and a belief system $\rho^*(\theta \mid s, \alpha)$,
\end{enumerate}

such that the following conditions hold.
\end{definition}

\paragraph{Sender optimality.}

For every $\theta \in \Theta$, the equilibrium control $\mu^*(\theta)$
solves the sender's problem \eqref{eq:sender_problem}, i.e.,
\begin{equation}
\mu^*(\theta)
\in
\arg\max_{y \in \mathcal{Y}}
U_S(\theta,y),
\label{eq:pbe_sender}
\end{equation}
where $U_S(\theta,y)$ is given by
\eqref{eq:sender_payoff_gaussian}. The optimization accounts for the
induced score distribution through
$\bar{s}_{\alpha,\theta}(y)$ and $\sigma_{\alpha,\theta}(y)$ defined in
\eqref{eq:score_distribution}.

\paragraph{Receiver optimality.}

For each $\alpha \in \mathcal{A}$ and $s \in \mathbb{R}$, the decision
rule $\delta_\alpha^*$ satisfies
\begin{equation}
\delta_\alpha^*(s)
\in
\arg\max_{a \in A}
\sum_{\theta \in \Theta}
\rho^*(\theta \mid s,\alpha)\, U_R(a,\theta),
\label{eq:pbe_receiver}
\end{equation}
where the posterior $\rho^*(\theta \mid s,\alpha)$ is given by
\eqref{eq:posterior_explicit} under the equilibrium strategy $\mu^*$.

\paragraph{Belief consistency.}

For all $s$ in the support of the equilibrium score distribution,
beliefs satisfy Bayes' rule:
\begin{equation}
\rho^*(\theta \mid s,\alpha)
=
\frac{\rho(\theta)\,
f_{\alpha,\theta}(s \mid \mu^*(\theta))}
{\sum_{\theta' \in \Theta}
\rho(\theta')\,
f_{\alpha,\theta'}(s \mid \mu^*(\theta'))},
\label{eq:pbe_bayes}
\end{equation}
where $f_{\alpha,\theta}(s \mid y)$ is defined in
\eqref{eq:likelihood}.

In equilibrium, the sender selects controls $\mu^*(\theta)$ that shape
the induced probability measure $\mathbb{P}_{\mu^*(\theta)}$ on
the messages and, consequently, the Gaussian score distribution
\eqref{eq:score_distribution}. The receiver performs statistical
inference based on $s$ and responds optimally. Under Proposition~\ref{prop:threshold_policy}, the receiver's strategy
reduces to a threshold rule \eqref{eq:threshold_rule}. Hence, the
equilibrium can be characterized as a fixed point between: (i) \emph{semantic control} $y = \mu(\theta)$, which determines the
distribution of scores through \eqref{eq:llm_chain_rule_formal} and
\eqref{eq:token_score}, and 
(ii) \emph{statistical detection}, in which receivers apply
likelihood-ratio tests based on \eqref{eq:likelihood}. 
This fixed point captures the interaction between controlled language
generation and heterogeneous inference across the receiver population.

\subsection{Direct Revelation with Generative Types}

We consider a finite type space $\Theta = \mathcal{Y} = \{\theta_0,\theta_1,\dots,\theta_K\}$, where each type $\theta \in \Theta$ corresponds to a canonical semantic control (prompt). A sender of true type $\theta$ selects a reported control $y \in \Theta$, and a strategy is a mapping $\mu : \Theta \to \Theta$. For each $\theta \in \Theta$, the semantic control induces a sequence of stochastic kernels $\{\pi_\theta(\cdot \mid h_k)\}_{k=1}^L$ as in \eqref{eq:llm_kernel_formal}, which uniquely define a probability measure $\mathbb{P}_\theta \in \Delta(V^L)$ via \eqref{eq:llm_chain_rule_formal}. Hence, each type $\theta$ determines a generative law over messages. With $\Theta = \mathcal{Y}$, reporting $y$ is equivalent to selecting the generative mechanism $\mathbb{P}_y$. A sender of type $\theta$ who reports $y$ induces the distribution $\mathbb{P}_y$ while retaining its intrinsic valuation $v_\theta(\cdot)$.

\begin{example}[Phishing with generative types]
\label{ex:phishing_generative_types_inline}

Consider a phishing detection setting in which a sender communicates through an LLM. In a classical formulation, the type space is $\Theta^{\text{orig}} = \{\text{legit}, \text{phish}\}$, which captures the sender’s intent, and the message (or semantic control) space $\mathcal{Y}^{\text{orig}}$ consists of prompts that can be used to generate messages. We assume that the legitimate type has a single baseline semantic control $y_0$ (truthful communication), while the phishing type has access to multiple semantic controls $y_1,y_2,y_3$. For instance, $y_1$ may correspond to an urgent credential request, $y_2$ to an invoice or payment scam, and $y_3$ to a security alert impersonation. In this representation, a sender of type $\theta \in \Theta^{\text{orig}}$ selects a prompt $y \in \mathcal{Y}^{\text{orig}}$, so the choice of $y$ depends on $\theta$ but the type itself does not distinguish among different phishing strategies.

In contrast, we construct a refined representation in which both the type space and the message space are given by a finite set of canonical semantic controls, $\Theta = \mathcal{Y} = \{\theta_0,\theta_1,\theta_2,\theta_3\}$. Each element $\theta_i$ corresponds to a prompt template, with $\theta_0$ representing legitimate communication and $\theta_1,\theta_2,\theta_3$ representing distinct phishing strategies such as credential requests, invoice scams, and security alerts. We identify each type $\theta_i$ with a prompt $y_i$, so that $\theta_i \leftrightarrow y_i$. In this refined model, the heterogeneity of phishing behavior is incorporated directly into the type space, and reporting $y \in \mathcal{Y}$ is equivalent to selecting a generative mechanism indexed by $\theta_i$.

\end{example} 

Let $\mathcal{A} = \{\alpha_1,\dots,\alpha_N\}$ be the finite receiver type space with distribution $F(\alpha_n)=p_n$. Under the Gaussian approximation \eqref{eq:score_distribution}, the acceptance probability for receiver type $\alpha_n$ is
\begin{equation}
P_n(\theta,y)
=
\Phi\!\left(
\frac{\eta(\alpha_n) - \bar{s}_{\alpha_n,\theta}(y)}
{\sigma_{\alpha_n,\theta}(y)}
\right).
\label{eq:acceptance_discrete}
\end{equation}
 
 The expected payoff of type $\theta$ under control $y$ is
\begin{equation}
U_S(\theta,y)
=
\sum_{n=1}^N
p_n\, w(\alpha_n)\, v_\theta(\alpha_n)\, P_n(\theta,y).
\label{eq:sender_payoff_discrete}
\end{equation}
Define the weights $\tilde{w}_{\theta,n} := p_n\, w(\alpha_n)\, v_\theta(\alpha_n)$. Then
\begin{equation}
U_S(\theta,y)
=
\sum_{n=1}^N
\tilde{w}_{\theta,n}\, P_n(\theta,y),
\label{eq:sender_payoff_compact}
\end{equation}
which exhibits $U_S(\theta,y)$ as a weighted aggregation of acceptance probabilities.

Define $U^S_{i,j} := U_S(\theta_i,\theta_j) = \sum_{n=1}^N \tilde{w}_{\theta_i,n}\, P_n(\theta_i,\theta_j)$. Then $U \in \mathbb{R}^{K \times K}$ is the payoff matrix, where row $i$ encodes the preferences of type $\theta_i$ and column $j$ corresponds to the generative distribution $\mathbb{P}_{\theta_j}$.

\begin{definition}[Direct revelation mechanism]
A (direct) mechanism is a mapping $\mu : \Theta \to \Theta$ assigning a reported control to each type.
\end{definition}

\begin{definition}[Incentive-Compatible Strategy]
A strategy $\mu$ is incentive compatible if $\mu(\theta_i)=\theta_i$ for all $i \in \{1,\dots,K\}$.
\end{definition}

\begin{definition}[Incentive compatibility]
A strategy $\mu$ is incentive compatible if
\begin{equation}
U_S(\theta_i,\mu(\theta_i))
\ge
U_S(\theta_i,y),
\quad \forall i \in \{1,\dots,K\},\; y \in \Theta.
\label{eq:IC_general}
\end{equation}
\end{definition}

\begin{proposition}[Matrix characterization of incentive compatibility]
\label{thm:IC_matrix}
The strategy $\mu(\theta_i)=\theta_i$ is incentive compatible if and only if
\begin{equation}
U^S_{i,i} \ge U^S_{i,j},
\quad \forall i,j \in \{1,\dots,K\}.
\label{eq:IC_discrete_full}
\end{equation}
\end{proposition}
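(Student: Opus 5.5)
The argument is a direct unfolding of the definition of incentive compatibility \eqref{eq:IC_general} in the finite direct-revelation setting. Since $\Theta=\mathcal{Y}$ is finite, the condition ``for all $y\in\Theta$'' in \eqref{eq:IC_general} is a finite family of inequalities, one for each $y=\theta_j$. By definition $U^S_{i,j}=U_S(\theta_i,\theta_j)$, the sender payoff in \eqref{eq:sender_payoff_compact} when type $\theta_i$ reports $\theta_j$. Under the truthful strategy $\mu(\theta_i)=\theta_i$, the left side of \eqref{eq:IC_general} is therefore $U_S(\theta_i,\theta_i)=U^S_{i,i}$, and the right side, for $y=\theta_j$, is $U^S_{i,j}$.

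For the forward direction, I assume truthful reporting satisfies \eqref{eq:IC_general}. Fixing $i$ and $j$ and specializing $y=\theta_j$ gives $U^S_{i,i}\ge U^S_{i,j}$, which is \eqref{eq:IC_discrete_full}.

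For the converse, I assume \eqref{eq:IC_discrete_full}. Any $y\in\Theta$ equals some $\theta_j$, so
\[
U_S(\theta_i,\mu(\theta_i))=U^S_{i,i}\ge U^S_{i,j}=U_S(\theta_i,y),
\]
which is exactly \eqref{eq:IC_general}. Equivalently, in each row $i$ of the payoff matrix the maximum is attained on the diagonal.

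There is no genuine analytic obstacle; the only care needed is bookkeeping of indices. The paper lists $\Theta=\{\theta_0,\dots,\theta_K\}$ but writes $U\in\mathbb{R}^{K\times K}$ and quantifies over $i\in\{1,\dots,K\}$. I would state that the same index set is used on both sides: either both run over $\{0,\dots,K\}$, or type $\theta_0$ is treated separately with the identical argument. The equivalence holds row by row, so the choice does not affect the proof.

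I would also note that the Gaussian form of $P_n$ in \eqref{eq:acceptance_discrete} plays no role. The characterization uses only that $U_S(\theta_i,\cdot)$ is a real-valued function on the finite set $\Theta$.
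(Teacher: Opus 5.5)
Your proof is correct and is the direct unfolding of \eqref{eq:IC_general} with $y=\theta_j$ that the paper intends; the paper states this proposition without a written proof, treating it as immediate. Your index remark about $\theta_0$ versus $\{1,\dots,K\}$ is apt. You also rightly read ``incentive compatible'' in the payoff sense of \eqref{eq:IC_general}, not in the sense of the paper's other definition (simply $\mu(\theta_i)=\theta_i$), under which the stated equivalence would not make sense.
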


\begin{proposition}[Row-wise dominance condition]
\label{thm:dominance}
Incentive-Compatible separation holds if and only if $U^S_{i,i} = \max_{j \in \{1,\dots,K\}} U^S_{i,j}$ for all $i$. Equivalently, for each type $\theta_i$, the diagonal entry is a maximizer of the $i$-th row, i.e., $\theta_i \in \arg\max_j U^S_{i,j}$.
\end{proposition}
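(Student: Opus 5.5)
The argument unfolds the definitions; all the content is in Proposition~\ref{thm:IC_matrix}, which I take as given. By definition, \emph{incentive-compatible separation} means the truthful strategy $\mu(\theta_i)=\theta_i$ satisfies the incentive-compatibility inequality \eqref{eq:IC_general}. Proposition~\ref{thm:IC_matrix} says this holds if and only if $U^S_{i,i}\ge U^S_{i,j}$ for all $i,j\in\{1,\dots,K\}$. So it is enough to show that this family of inequalities is equivalent to the row-wise statement $U^S_{i,i}=\max_j U^S_{i,j}$ for every $i$.

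I would prove the equivalence row by row, with $i$ fixed.

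\emph{Forward direction.} Suppose $U^S_{i,i}\ge U^S_{i,j}$ for all $j$. Since the index set is finite, $\max_j U^S_{i,j}$ exists. The inequalities show $U^S_{i,i}$ is an upper bound for the row. Since $U^S_{i,i}$ is itself an entry of the row, it is attained, so it equals the maximum.

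\emph{Reverse direction.} Suppose $U^S_{i,i}=\max_j U^S_{i,j}$. By definition of the maximum, $U^S_{i,i}\ge U^S_{i,j}$ for every $j$.

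The reformulation $\theta_i\in\arg\max_j U^S_{i,j}$ is the same statement in argmax notation. It does not claim the maximizer is unique; ties with off-diagonal entries are allowed, consistent with the weak inequality in \eqref{eq:IC_general}.

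The only step needing care is bookkeeping about indices. The type space is $\{\theta_0,\dots,\theta_K\}$, while \eqref{eq:IC_general} and the matrix $U\in\mathbb{R}^{K\times K}$ are indexed by $\{1,\dots,K\}$. I would fix one index set and use it both for the rows $i$ and for the deviations $y=\theta_j$. Then the quantifier ``for all $y\in\Theta$'' in \eqref{eq:IC_general} matches the column range in the maximum exactly, and Proposition~\ref{thm:IC_matrix} applies verbatim. Apart from this, the argument rests on finiteness of $\Theta$, which guarantees the maximum is attained.
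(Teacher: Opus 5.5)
Your proof is correct and matches what the paper intends. The paper gives no separate argument for this proposition and treats it as an immediate restatement of Proposition~\ref{thm:IC_matrix}; your reduction to that proposition, plus the observation that a row entry dominating every other entry is the row maximum (ties allowed), is exactly that restatement, and your remark about reconciling the index sets $\{0,\dots,K\}$ and $\{1,\dots,K\}$ is fair housekeeping the paper leaves implicit.
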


 \begin{proposition}[PBNE under row-wise dominance]
\label{prop:pbne_dominance}

Consider the direct revelation model with 
$\Theta = \mathcal{Y} = \{\theta_1,\dots,\theta_K\}$ and the payoff matrix 
of the sender $U \in \mathbb{R}^{K \times K}$ defined by 
$U^S_{i,j} = U_S(\theta_i, y=\theta_j)$. Suppose that the row-wise dominance condition holds, i.e. 
$U^S_{i,i} \ge U^S_{i,j}$ for all $i,j$. Then there exists a Perfect Bayesian Nash Equilibrium (PBNE)
$(\mu^*, \{\delta_\alpha^*\}_{\alpha \in \mathcal{A}}, \rho^*)$
with the following structure:

\begin{enumerate}
\item[(i)] \textbf{Sender strategy:} 
The sender plays the incentive-compatible strategy $\mu^*(\theta_i)=\theta_i$ for all $i$.

\item[(ii)] \textbf{Beliefs:} 
For every $\alpha \in \mathcal{A}$ and observed score $s \in \mathbb{R}$, beliefs are updated via Bayes’ rule as
\[
\rho^*(\theta \mid s, \alpha)
=
\frac{
\rho(\theta)\, f_{\alpha,\theta}(s \mid \mu^*(\theta))
}{
\sum_{\theta' \in \Theta}
\rho(\theta')\, f_{\alpha,\theta'}(s \mid \mu^*(\theta'))
}.
\]
Under the Gaussian approximation, we have 
$S_\alpha \mid (\theta, \mu^*(\theta)) \approx \mathcal{N}(\bar{s}_{\alpha,\theta}(\mu^*(\theta)), \sigma^2_{\alpha,\theta}(\mu^*(\theta)))$, 
so $f_{\alpha,\theta}(\cdot \mid y)$ is given by \eqref{eq:likelihood}.

\item[(iii)] \textbf{Receiver strategy:} 
For each $\alpha \in \mathcal{A}$, the receiver applies the threshold rule
\[
\delta_\alpha^*(s)
=
\begin{cases}
\text{allow}, & s \le \eta(\alpha),\\
\text{block}, & s > \eta(\alpha),
\end{cases}
\]
which is optimal given the posterior $\rho^*(\cdot \mid s,\alpha)$.
\end{enumerate}

\noindent
Moreover, the equilibrium is separating in distribution: each type $\theta_i$ induces a distinct score distribution $f_{\alpha,\theta_i}(\cdot \mid \theta_i)$, so types are statistically identifiable though not perfectly revealed pointwise.

\end{proposition}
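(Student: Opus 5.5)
The plan is to verify the three PBNE conditions (sender optimality, receiver optimality, belief consistency) one at a time for the candidate profile $(\mu^*,\{\delta_\alpha^*\},\rho^*)$. The receiver side is treated as fixed when checking the sender's incentives, as the paper does throughout. The sender payoff \eqref{eq:sender_payoff_discrete} depends on the receiver only through the thresholds $\eta(\alpha_n)$ entering $P_n(\theta,y)$.

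\textbf{Sender optimality.} I would fix the thresholds $\eta(\alpha_n)$ that define $\delta^*_\alpha$, so that the matrix $U^S$ is well defined. The row-wise dominance hypothesis $U^S_{i,i}\ge U^S_{i,j}$ then says exactly that $\theta_i\in\arg\max_{y\in\Theta}U_S(\theta_i,y)$. This is \eqref{eq:pbe_sender} for $\mu^*(\theta_i)=\theta_i$. Equivalently, I would invoke Proposition~\ref{thm:IC_matrix} or Proposition~\ref{thm:dominance} to conclude that the truthful map is incentive compatible.

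\textbf{Belief consistency.} Under $\mu^*$, type $\theta$ induces the law $\mathbb{P}_\theta$ on $V^L$. By Proposition~\ref{prop:gaussian_scores}, the score then has the approximate Gaussian density \eqref{eq:likelihood} with parameters $\bar s_{\alpha,\theta}(\theta)$ and $\sigma_{\alpha,\theta}(\theta)$. Gaussian densities are strictly positive on $\mathbb{R}$, so every $s$ lies in the support and the denominator in (ii) is positive. Hence defining $\rho^*$ by (ii) is exactly \eqref{eq:pbe_bayes}, with no off-path beliefs to specify.

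\textbf{Receiver optimality.} This is the main obstacle. Proposition~\ref{prop:threshold_policy} gives optimality of a likelihood-ratio test only for two types. It reduces to a score threshold only when the variances are equal, so that $\log\Lambda$ is affine. With $K$ types I would proceed in two steps.
\begin{itemize}[label={}]
\item \emph{Step 1.} Take the binary action problem and group the types into $\Theta^{\mathrm{legit}}=\{\theta_0\}$ and $\Theta^{\mathrm{mal}}$, with utilities depending only on the group. Blocking is then optimal iff $\sum_{\theta\in\Theta^{\mathrm{mal}}}\rho(\theta)f_{\alpha,\theta}(s)\,\Delta_{\mathrm{mal}}\ge\rho(\theta_0)f_{\alpha,\theta_0}(s)\,\Delta_{\mathrm{legit}}$. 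This is a likelihood-ratio test between the legitimate density and the malicious mixture, with $\kappa$ as in \eqref{eq:kappa}.
\item \emph{Step 2.} To get the threshold form, I would assume what the paper implicitly uses. Variances are common across types for fixed $\alpha$, and malicious means exceed the legitimate mean, $\bar s_{\alpha,\theta}>\bar s_{\alpha,\theta_0}$. Each ratio $f_{\alpha,\theta}/f_{\alpha,\theta_0}$ is then increasing in $s$. A positive combination of increasing functions is increasing, so the mixture ratio is monotone and the test is $s>\eta(\alpha)$, with $\eta(\alpha)$ the crossing point.
\end{itemize}
Should the variance assumption fail, I would instead read (iii) as holding with $\eta(\alpha)$ replaced by the acceptance region of Step 1.

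Finally, the thresholds in Step 2 are computed from the beliefs induced by $\mu^*$, and $U^S$ is defined at those same thresholds. The profile is therefore a fixed point, which closes the argument. The "separating in distribution" remark would follow by noting that distinct canonical controls induce distinct laws $\mathbb{P}_{\theta_i}$. I would assume these laws give distinct $(\bar s,\sigma)$ pairs, and then distinct Gaussians. Overlapping supports mean types are not revealed pointwise.
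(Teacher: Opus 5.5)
Your proposal is correct and follows the same route as the paper's proof: row-wise dominance gives sender optimality of the truthful map, Bayes' rule with the Gaussian likelihoods gives belief consistency, and likelihood-ratio optimality in the binary allow/block problem gives the threshold rule. Where the paper simply asserts that the optimal rule is a threshold in $s$, you spell out what that step needs: a common score variance across types for fixed $\alpha$, malicious means above the legitimate one (so the mixture likelihood ratio is monotone), distinct score laws (for the separation claim), and $U^S$ evaluated at the equilibrium thresholds $\eta(\alpha)$. Your full-support argument is also a cleaner reason than the paper's off-path remark for why no off-path beliefs need to be specified.
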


\begin{proof}
With row-wise dominance, for each type $\theta_i$, we have 
$U_S(\theta_i,\theta_i) \ge U_S(\theta_i,\theta_j)$ for all $j$, so the incentive-compatible strategy $\mu^*(\theta_i)=\theta_i$ is optimal. 
Given $\mu^*$, each type $\theta$ induces a score distribution 
$S_\alpha \mid (\theta, \mu^*(\theta)) \sim f_{\alpha,\theta}(\cdot \mid \mu^*(\theta))$, 
which is approximately Gaussian. Hence, beliefs $\rho^*(\theta \mid s,\alpha)$ follow the Bayes rule, establishing belief consistency.

Given these beliefs, the receiver solves a binary decision problem. By likelihood-ratio optimality, the optimal rule is a threshold rule in $s$, yielding $\delta_\alpha^*$. 
Off-path beliefs do not affect incentives since incentive-compatible reporting is optimal under row-wise dominance. 
Therefore, $(\mu^*, \{\delta_\alpha^*\}, \rho^*)$ satisfies sequential rationality and belief consistency and thus constitutes a PBNE.
\end{proof}

\subsection{Awareness and Statistical Discrimination}

\begin{definition}[Ordering of awareness types]
\label{def:awareness_order}

For $\alpha, \alpha' \in \mathcal{A}$, we say that $\alpha$ is
\emph{more aware} than $\alpha'$ (denoted $\alpha \succeq \alpha'$)
if, for all $y \in \mathcal{Y}$,
\[
\bar{s}_{\alpha,\theta_{\mathrm{mal}}}(y)
-
\bar{s}_{\alpha,\theta_{\mathrm{legit}}}(y)
\;\ge\;
\bar{s}_{\alpha',\theta_{\mathrm{mal}}}(y)
-
\bar{s}_{\alpha',\theta_{\mathrm{legit}}}(y).
\]

If the inequality is strict for some $y$, we write $\alpha \succ \alpha'$.
\end{definition}

\begin{proposition}[Awareness, threshold monotonicity, and behavior]
\label{prop:awareness_order}

Consider the Gaussian score model
$S_\alpha \mid (\theta,y) \sim \mathcal{N}(\bar{s}_{\alpha,\theta}(y), \sigma^2_\alpha(y))$
and suppose $A = \{\text{allow}, \text{block}\}$ with decision rule
$\delta_\alpha^*(s) = \mathbf{1}\{s > \eta(\alpha)\}$. 
Let the awareness ordering $\alpha \succeq \alpha'$ be defined as in
Definition~\ref{def:awareness_order}, i.e., for all $y$,
\[
\bar{s}_{\alpha,\theta_{\mathrm{mal}}}(y)
-
\bar{s}_{\alpha,\theta_{\mathrm{legit}}}(y)
\;\ge\;
\bar{s}_{\alpha',\theta_{\mathrm{mal}}}(y)
-
\bar{s}_{\alpha',\theta_{\mathrm{legit}}}(y).
\]

Then:

\begin{enumerate}
\item[(a)] \textbf{Likelihood-ratio monotonicity:}
For each $\alpha$, the likelihood ratio $\Lambda_\alpha(s)$ is increasing in $s$. Moreover, if $\alpha \succeq \alpha'$, then the signal-to-noise ratio
$\frac{\bar{s}_{\alpha,\theta_{\mathrm{mal}}}(y)-\bar{s}_{\alpha,\theta_{\mathrm{legit}}}(y)}{\sigma_\alpha(y)}$
is (weakly) larger, implying stronger statistical discrimination.

\item[(b)] \textbf{Threshold monotonicity:}
If $\alpha \succeq \alpha'$, then $\eta(\alpha) \le \eta(\alpha')$.
Thus, more aware receivers apply (weakly) stricter thresholds.

\item[(c)] \textbf{Acceptance ordering:}
For any $(\theta,y)$, letting
$P_\alpha(\theta,y) = \mathbb{P}(S_\alpha \le \eta(\alpha)\mid \theta,y)$,
we have
\[
P_\alpha(\theta,y) \le P_{\alpha'}(\theta,y)
\quad \text{whenever } \alpha \succeq \alpha'.
\]

\item[(d)] \textbf{Behavioral monotonicity:}
For any $s \in \mathbb{R}$,
\[
\delta_\alpha^*(s) \ge \delta_{\alpha'}^*(s)
\quad \text{whenever } \alpha \succeq \alpha',
\]
i.e., more aware receivers are (weakly) more likely to block.
\end{enumerate}

\end{proposition}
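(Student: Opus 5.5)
The plan is to reduce everything to the Gaussian likelihood ratio with a common variance, and to make explicit the normalizations the statement leaves implicit.

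Write $\Delta_\alpha(y):=\bar{s}_{\alpha,\theta_{\mathrm{mal}}}(y)-\bar{s}_{\alpha,\theta_{\mathrm{legit}}}(y)$. With common variance $\sigma_\alpha^2(y)$, the computation leading to \eqref{eq:threshold_rule} gives
\[
\log\Lambda_\alpha(s)=\frac{\Delta_\alpha(y)}{\sigma_\alpha^2(y)}\Bigl(s-\frac{\bar{s}_{\alpha,\theta_{\mathrm{mal}}}(y)+\bar{s}_{\alpha,\theta_{\mathrm{legit}}}(y)}{2}\Bigr).
\]
This is affine in $s$, with slope positive whenever $\Delta_\alpha(y)>0$. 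That sign condition is the natural orientation of the score (malicious messages score higher), and I will state it as a standing assumption. It gives the first half of (a).

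For the signal-to-noise claim in (a), Definition~\ref{def:awareness_order} only controls the numerators $\Delta_\alpha\ge\Delta_{\alpha'}$. I will therefore add the assumption $\sigma_\alpha(y)\le\sigma_{\alpha'}(y)$, or simply a common noise level across awareness types. Then $\Delta_\alpha/\sigma_\alpha\ge\Delta_{\alpha'}/\sigma_{\alpha'}$ follows immediately.

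For (b), I invert the likelihood-ratio test of Proposition~\ref{prop:threshold_policy}: $\Lambda_\alpha(s)>\kappa$ if and only if
\[
s>\eta(\alpha)=\frac{\bar{s}_{\alpha,\theta_{\mathrm{mal}}}+\bar{s}_{\alpha,\theta_{\mathrm{legit}}}}{2}+\frac{\sigma_\alpha^2\log\kappa}{\Delta_\alpha}.
\]
This expression shows the main obstacle. The threshold depends on the midpoint of the means, not only on their gap, so $\eta(\alpha)\le\eta(\alpha')$ cannot follow from Definition~\ref{def:awareness_order} alone. Shifting all of $\alpha$'s scores by a constant leaves $\Delta_\alpha$ unchanged but moves $\eta(\alpha)$ arbitrarily.

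My first attempt will be to fix a common anchor: the legitimate score distribution is the same across awareness types, $\bar{s}_{\alpha,\theta_{\mathrm{legit}}}=\bar{s}_{\alpha',\theta_{\mathrm{legit}}}$. This is a normalization against benign traffic, natural when awareness adds nonnegative weights on alert tokens that rarely occur in legitimate mail. Together with equal $\sigma$ and $\kappa$, and $\kappa$ fixed by \eqref{eq:kappa} independently of $\alpha$, this gives
\[
\eta(\alpha)-\bar{s}_{\mathrm{legit}}=\frac{\Delta_\alpha}{2}+\frac{\sigma^2\log\kappa}{\Delta_\alpha}.
\]
Its monotonicity in $\Delta$ depends on the sign of $\log\kappa$. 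So I expect to need a further regime condition, such as $\kappa$ large with $\Delta^2\le2\sigma^2\log\kappa$ on the relevant range. Alternatively, I would state (b) in standardized units: the false-alarm threshold $(\eta-\bar{s}_{\mathrm{legit}})/\sigma$, which is the quantity that actually matters for behavior. I will state explicitly whichever condition is needed, rather than claim the result unconditionally.

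Given (b), parts (c) and (d) are short. For (d), $\delta^*_\alpha(s)=\mathbf{1}\{s>\eta(\alpha)\}\ge\mathbf{1}\{s>\eta(\alpha')\}$ because $\eta(\alpha)\le\eta(\alpha')$. This comparison is on a common score scale, which is another reason the anchoring normalization is needed.

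For (c), I use \eqref{eq:acceptance_probability}: $P_\alpha(\theta,y)=\Phi\bigl((\eta(\alpha)-\bar{s}_{\alpha,\theta}(y))/\sigma\bigr)$. Since $\Phi$ is monotone, it suffices to show $\eta(\alpha)-\bar{s}_{\alpha,\theta}\le\eta(\alpha')-\bar{s}_{\alpha',\theta}$.
\begin{itemize}
\item For $\theta_{\mathrm{legit}}$, this is (b) plus the anchoring.
\item For $\theta_{\mathrm{mal}}$, I write $\bar{s}_{\alpha,\mathrm{mal}}=\bar{s}_{\mathrm{legit}}+\Delta_\alpha$. The claim then becomes (b) together with $\Delta_\alpha\ge\Delta_{\alpha'}$, which are two inequalities in the same direction.
\end{itemize}

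Thus the proof is routine algebra once the normalizations are fixed. The real work is isolating the minimal extra hypotheses (orientation, common variance, common legitimate baseline, and a sign regime for $\log\kappa$) under which (b) holds, since (a), (c) and (d) all inherit from it.
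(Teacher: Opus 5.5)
Your diagnosis is right, and your argument is sound for the qualified statement. It is a genuinely different, and more careful, route than the paper's.

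\textbf{Comparison with the paper.} The paper's proof matches yours on the first half of (a): an affine log-likelihood ratio, with positive slope tacitly assumed. It then diverges in three places.
\begin{itemize}
\item It asserts (b) from the remark that a larger mean gap means ``smaller values of $s$ suffice to detect the malicious type.''
\item It deduces (c) from the inclusion $\{s\le\eta(\alpha)\}\subseteq\{s\le\eta(\alpha')\}$.
\item It never supplies the variance comparison needed for the signal-to-noise claim in (a).
\end{itemize}
As you saw, these steps fail at the stated generality. Take $\psi_\alpha=\psi_{\alpha'}+c/L$, so $S_\alpha=S_{\alpha'}+c$. Then both $\alpha\succeq\alpha'$ and $\alpha'\succeq\alpha$ hold, yet $\eta(\alpha)=\eta(\alpha')+c$, which contradicts (b) and (d). The inclusion argument for (c) silently compares probabilities of two differently distributed scores $S_\alpha$ and $S_{\alpha'}$ on a common axis. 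The paper's own case study even reports thresholds $\eta_\alpha=18.64<24.74<29.25$ that increase with awareness (Table~\ref{tab:case_scores}); these are midpoint thresholds, i.e.\ the equal-variance rule with $\kappa=1$.

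\textbf{What each route buys.} The paper's route buys brevity and an unconditional-looking statement. Yours buys the explicit formula $\eta(\alpha)=\tfrac12(\bar s_{\alpha,\theta_{\mathrm{mal}}}+\bar s_{\alpha,\theta_{\mathrm{legit}}})+\sigma_\alpha^2\log\kappa/\Delta_\alpha$. From it you read off exactly which extra hypotheses are responsible: orientation, a variance comparison, a common legitimate anchor, and $0<\Delta_\alpha^2\le 2\sigma^2\log\kappa$. The result is a statement that is actually true.

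\textbf{Two refinements.} Write $d_\alpha:=\Delta_\alpha/\sigma_\alpha$.

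First, your ``standardized units'' fallback removes the need for the anchor but not for the $\kappa$-regime. The standardized threshold is $(\eta(\alpha)-\bar s_{\alpha,\theta_{\mathrm{legit}}})/\sigma_\alpha=d_\alpha/2+\log\kappa/d_\alpha$. This is decreasing in $d_\alpha$ only when $d_\alpha^2\le 2\log\kappa$.

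Second, for (c) you need no anchor at all. At the control $y$ used to set the threshold,
$P_\alpha(\theta_{\mathrm{mal}},y)=\Phi(\log\kappa/d_\alpha-d_\alpha/2)$ and
$P_\alpha(\theta_{\mathrm{legit}},y)=\Phi(\log\kappa/d_\alpha+d_\alpha/2)$.
So, given orientation, the malicious half of (c) follows from $d_\alpha\ge d_{\alpha'}$ and $\log\kappa\ge 0$ alone. Only the legitimate half needs $d_\alpha^2\le 2\log\kappa$.

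This shows that the behaviorally meaningful conclusion, lower acceptance of malicious messages, is the robust one. By contrast, (b) and (d) are raw-scale comparisons that genuinely require a common anchor.
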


\begin{proof}
Under the Gaussian model with variance independent of $\theta$, the log-likelihood ratio is affine in $s$:
\[
\log \Lambda_\alpha(s)
=
\frac{
\bar{s}_{\alpha,\theta_{\mathrm{mal}}}(y)
-
\bar{s}_{\alpha,\theta_{\mathrm{legit}}}(y)
}{
\sigma_\alpha^2(y)
}\, s
+ \text{const}.
\]
Hence $\Lambda_\alpha(s)$ is increasing in $s$, and its slope is proportional to the separation of means, establishing (a). 
By optimality, the receiver applies a likelihood-ratio test, which reduces to a threshold rule in $s$. A larger separation between $\bar{s}_{\alpha,\theta_{\mathrm{mal}}}(y)$ and $\bar{s}_{\alpha,\theta_{\mathrm{legit}}}(y)$ implies that smaller values of $s$ suffice to detect the malicious type. Hence $\eta(\alpha) \le \eta(\alpha')$ when $\alpha \succeq \alpha'$, proving (b). 
Given $\eta(\alpha) \le \eta(\alpha')$, we have
$\{s \le \eta(\alpha)\} \subseteq \{s \le \eta(\alpha')\}$, which implies
$P_\alpha(\theta,y) \le P_{\alpha'}(\theta,y)$, establishing (c). 
Finally, since $\delta_\alpha^*(s) = \mathbf{1}\{s > \eta(\alpha)\}$ and $\eta(\alpha) \le \eta(\alpha')$, it follows that
$\delta_\alpha^*(s) \ge \delta_{\alpha'}^*(s)$ for all $s$, proving (d).
\end{proof}

\begin{proposition}[Keyword enrichment and strict awareness ordering]
\label{prop:keyword_awareness_strong}

Suppose the scoring function is given by
$\psi_\alpha(w) = \mathbf{1}\{w \in K_\alpha\}$ for a family of keyword sets
$\{K_\alpha\}_{\alpha \in \mathcal{A}}$, and let
\[
\bar{s}_{\alpha,\theta}(y)
=
\sum_{k=1}^L \mathbb{P}(W_k \in K_\alpha \mid \theta,y).
\]
Let $\Delta K := K_\alpha \setminus K_{\alpha'}$ denote the additional
keywords of $\alpha$ relative to $\alpha'$. 
Assume that for all $y \in \mathcal{Y}$,
\[
\mathbb{P}(W_k \in \Delta K \mid \theta_{\mathrm{mal}},y)
\;\ge\;
\mathbb{P}(W_k \in \Delta K \mid \theta_{\mathrm{legit}},y).
\]
Then:
\begin{enumerate}
\item[(i)] \textbf{Additive discrimination decomposition:}
\[
\bar{s}_{\alpha,\theta}(y)
=
\bar{s}_{\alpha',\theta}(y)
+
\sum_{k=1}^L
\mathbb{P}(W_k \in \Delta K \mid \theta,y).
\]

\item[(ii)] \textbf{Weak awareness ordering:}
\[
\bar{s}_{\alpha,\theta_{\mathrm{mal}}}(y)
-
\bar{s}_{\alpha,\theta_{\mathrm{legit}}}(y)
\;\ge\;
\bar{s}_{\alpha',\theta_{\mathrm{mal}}}(y)
-
\bar{s}_{\alpha',\theta_{\mathrm{legit}}}(y),
\]
and hence $\alpha \succeq \alpha'$.

\item[(iii)] \textbf{Strict awareness (if informative expansion):}
If the inequality is strict for some $y$, then $\alpha \succ \alpha'$.

\end{enumerate}

Consequently, awareness increases if and only if the additional
features $\Delta K$ contribute positively to type discrimination.
In particular, keyword set inclusion $K_\alpha \supseteq K_{\alpha'}$
improves awareness if and only if the newly added keywords are
informative, i.e., more likely under $\theta_{\mathrm{mal}}$ than
$\theta_{\mathrm{legit}}$.

\end{proposition}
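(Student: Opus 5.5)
The plan is to work directly from the formula for $\bar{s}_{\alpha,\theta}(y)$ and reduce every claim to linearity of expectation and additivity of probability over disjoint events. The one structural assumption needed is the nesting $K_{\alpha'} \subseteq K_\alpha$, as in the keyword example, since the statement speaks of $K_\alpha\supseteq K_{\alpha'}$. Under nesting, $K_\alpha$ is the disjoint union $K_{\alpha'} \,\dot\cup\, \Delta K$. Hence for every $k$ we have $\mathbf{1}\{W_k\in K_\alpha\} = \mathbf{1}\{W_k\in K_{\alpha'}\} + \mathbf{1}\{W_k\in \Delta K\}$ pointwise. Taking conditional expectations given $(\theta,y)$ and summing over $k=1,\dots,L$ yields the decomposition in (i). This is the only place nesting is used; without it one would have to subtract a term for $K_{\alpha'}\setminus K_\alpha$, and (i) would fail.

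For (ii), write (i) once for $\theta_{\mathrm{mal}}$ and once for $\theta_{\mathrm{legit}}$ and subtract. The baseline terms $\bar{s}_{\alpha',\theta}(y)$ reproduce the discrimination gap of $\alpha'$. The remaining term is
\[
D(y):=\sum_{k=1}^L\Big(\mathbb{P}(W_k\in\Delta K\mid\theta_{\mathrm{mal}},y)-\mathbb{P}(W_k\in\Delta K\mid\theta_{\mathrm{legit}},y)\Big).
\]
Each summand of $D(y)$ is nonnegative by hypothesis, so $D(y)\ge 0$ for every $y$. This is exactly the inequality in Definition~\ref{def:awareness_order}, so $\alpha\succeq\alpha'$.

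For (iii), if the hypothesis is strict for some $y$ and some $k$, then $D(y)>0$ for that $y$, and the gap of $\alpha$ strictly exceeds that of $\alpha'$. By the definition this gives $\alpha\succ\alpha'$.

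The closing ``if and only if'' remark comes from the identity established along the way: the gap of $\alpha$ minus the gap of $\alpha'$ equals $D(y)$ exactly. Therefore the weak ordering holds for all $y$ iff $D(y)\ge 0$ for all $y$, and strictness holds iff $D(y)>0$ for some $y$.

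The main obstacle is interpretive rather than technical. The termwise hypothesis is sufficient but not necessary; only the aggregated quantity $D(y)$ matters. So the converse has to be stated as ``$\alpha\succeq\alpha'$ iff the additional keywords are informative in aggregate, i.e.\ $D(y)\ge0$ for all $y$.'' The termwise condition should then be presented as a convenient sufficient version.
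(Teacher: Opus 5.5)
The paper states this proposition without a proof, so there is no argument of its own to compare with. Your proof is correct and is the direct verification the statement calls for: split $\mathbf{1}\{W_k\in K_\alpha\}$ over the disjoint sets $K_{\alpha'}$ and $\Delta K$, take conditional expectations, sum over $k$, and subtract across types.

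Both of your caveats are substantive rather than cosmetic.

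The nesting $K_{\alpha'}\subseteq K_\alpha$ is not among the stated hypotheses, and dropping it breaks (ii), not only (i). Write $G_\alpha(y):=\bar{s}_{\alpha,\theta_{\mathrm{mal}}}(y)-\bar{s}_{\alpha,\theta_{\mathrm{legit}}}(y)$, and let $D_B(y)$ denote your $D(y)$ with $\Delta K$ replaced by a set $B$. The same computation gives, in general,
\[
G_\alpha(y)-G_{\alpha'}(y)=D_{K_\alpha\setminus K_{\alpha'}}(y)-D_{K_{\alpha'}\setminus K_\alpha}(y).
\]
So if $K_{\alpha'}$ contains a strongly discriminating keyword that $K_\alpha$ lacks, the hypothesis on $\Delta K$ can hold while $\alpha\succeq\alpha'$ fails. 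Nesting therefore has to be added to the proposition, or replaced by the condition $D_{K_\alpha\setminus K_{\alpha'}}(y)\ge D_{K_{\alpha'}\setminus K_\alpha}(y)$ for all $y$.

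Likewise, the closing ``if and only if'' holds only for the aggregate quantity $D(y)$: $D(y)\ge 0$ for all $y$, with $D(y)>0$ for some $y$ for strictness. It does not hold for the termwise or per-keyword notion of informativeness. As you say, the termwise hypothesis is only sufficient.
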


\subsection{Incentive Design for Universal Benign Semantics}

We study the incentive structure under which the equilibrium induces
\emph{uniform benign behavior}, i.e.,
\(
\mu(\theta) = y_0, \ \forall \theta \in \Theta,
\)
so that all types, legitimate or malicious, select the same baseline
semantic control $y_0 \in \mathcal{Y}$. This corresponds to a pooling
outcome in which no sender has an incentive to deviate to alternative
semantic controls associated with phishing or manipulation.

From \eqref{eq:sender_payoff_gaussian}, the sender's expected payoff is
$U_S(\theta,y)$ for $\theta \in \Theta$ and $y \in \mathcal{Y}$, defining
a mapping $U_S : \Theta \times \mathcal{Y} \to \mathbb{R}$. For
finite sets $\Theta=\{\theta_1,\dots,\theta_K\}$ and
$\mathcal{Y}=\{y_0,y_1,\dots,y_M\}$, define
$U^S_{i,j} := U_S(\theta_i, y_j)$, yielding a $K \times (M+1)$ payoff
array; in particular, $\Theta$ and $\mathcal{Y}$ need not coincide.

The strategy $\mu(\theta)=y_0$ is optimal for all types if and only if it
solves \eqref{eq:sender_problem} for every $\theta \in \Theta$, which is
equivalent to the condition
\begin{equation}
U_S(\theta, y_0)
\;\ge\;
U_S(\theta, y),
\quad
\forall \theta \in \Theta,\; \forall y \in \mathcal{Y}.
\label{eq:benign_pooling_condition_general}
\end{equation}
In discrete notation, this becomes
\(
U^S_{i,0} \ge U^S_{i,j}, \ \forall i,j,
\)
i.e., the column corresponding to $y_0$ dominates all other columns
row-wise.

By \eqref{eq:sender_payoff_gaussian} and
\eqref{eq:acceptance_prob_sender}, the payoff $U_S(\theta,y)$ is driven
by the acceptance probability
\[
\mathbb{P}\!\left(S_{\alpha,L} \le \eta(\alpha)\mid \theta,y,\alpha\right),
\]
which depends on the Gaussian parameters
$\bar{s}_{\alpha,\theta}(y)$ and $\sigma_{\alpha,\theta}(y)$ induced by
the semantic control $y$. Hence, \eqref{eq:benign_pooling_condition_general}
requires that, for every type $\theta$, the control $y_0$ yields a
(weakly) higher expected payoff than any alternative $y$, thereby
penalizing deviations through their impact on the induced score
distribution and acceptance probabilities.

\begin{proposition}[Benign pooling equilibrium]
Let $U_S : \Theta \times \mathcal{Y} \to \mathbb{R}$ be the sender payoff
defined in \eqref{eq:sender_payoff_gaussian}. If the incentive condition
\eqref{eq:benign_pooling_condition_general} holds, then there exists a
Perfect Bayesian Nash Equilibrium
$(\mu^*, \{\delta_\alpha^*\}_{\alpha \in \mathcal{A}}, \rho^*)$ such that
\[
\mu^*(\theta) = y_0, \quad \forall \theta \in \Theta,
\]
where, for each $\alpha \in \mathcal{A}$, the receiver strategy
$\delta_\alpha^*$ is a best response to the induced pooling distribution
and is characterized by the threshold rule \eqref{eq:threshold_rule}, and
the belief system $\rho^*$ is consistent with Bayes' rule wherever
applicable.
\end{proposition}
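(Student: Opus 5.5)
The plan is to construct the three components of the equilibrium directly and then check sender optimality, receiver optimality, and belief consistency in turn, following the template of the proof of Proposition~\ref{prop:pbne_dominance}.

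\textbf{Step 1: sender strategy.} Set $\mu^*(\theta)=y_0$ for every $\theta\in\Theta$. The payoff $U_S(\theta,y)$ in \eqref{eq:sender_payoff_gaussian} is computed with the thresholds $\eta(\alpha)$ held fixed. So condition \eqref{eq:benign_pooling_condition_general} says exactly that $y_0\in\arg\max_{y}U_S(\theta,y)$ for every type, which is \eqref{eq:pbe_sender}. Sender optimality therefore follows immediately once the receiver rules are specified as threshold rules with those $\eta(\alpha)$.

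\textbf{Step 2: beliefs.} Under pooling, every type induces the generative law $\mathbb{P}_{y_0}$, up to type dependence entering only through $\bar s_{\alpha,\theta}(y_0)$ and $\sigma_{\alpha,\theta}(y_0)$ in \eqref{eq:score_distribution}. The Gaussian densities in \eqref{eq:likelihood} have full support on $\mathbb{R}$. Hence every score $s$ is on-path, and I define $\rho^*(\theta\mid s,\alpha)$ by \eqref{eq:pbe_bayes} with $\mu^*(\theta')=y_0$ for all $\theta'$. This gives belief consistency everywhere. If the type-conditional score laws at $y_0$ happen to coincide, the posterior equals the prior, which is still Bayes-consistent. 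Off-path controls $y\neq y_0$ are not observed by the receiver, who sees only $s$. So no off-path belief specification is needed beyond what is already given, and the phrase ``wherever applicable'' is satisfied.

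\textbf{Step 3: receiver optimality.} In the binary setting, I invoke Proposition~\ref{prop:threshold_policy} with $y=y_0$ to obtain a likelihood-ratio rule with cutoff $\kappa$ from \eqref{eq:kappa}. Under the equal-variance Gaussian case, $\log\Lambda$ is affine in $s$, so the rule reduces to \eqref{eq:threshold_rule} with some $\eta(\alpha)$. The degenerate cases are handled as follows. If the means coincide, then $\Lambda\equiv1$ and the receiver is indifferent or uses a constant action. That is a threshold rule with $\eta(\alpha)=\pm\infty$, and it is still a best response.

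\textbf{Main obstacle.} The main obstacle is the fixed-point consistency between the $\eta(\alpha)$ used in $U_S$ in Step 1 and the best-response thresholds derived in Step 3. The statement's hypothesis \eqref{eq:benign_pooling_condition_general} should be read with $U_S$ evaluated at the receivers' equilibrium thresholds, namely those best-responding to the pooling distribution at $y_0$. I will make this reading explicit. With that reading, Steps 1–3 close the loop: thresholds are computed from $\rho^*$ under pooling, the condition is checked at those thresholds, and therefore all three PBNE conditions hold.
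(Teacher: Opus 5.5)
Your proposal is correct and matches the argument the paper relies on. The paper gives no separate proof of this proposition, but your steps follow the template of the proof of Proposition~\ref{prop:pbne_dominance}: sender optimality from \eqref{eq:benign_pooling_condition_general}, Bayes-consistent beliefs, and a likelihood-ratio threshold best response. Your explicit requirement that \eqref{eq:benign_pooling_condition_general} be checked at the thresholds best-responding to pooling at $y_0$ makes precise a point the paper leaves implicit. Also note that the case you call degenerate is the generic one under the paper's generative model, where messages are drawn from $\mathbb{P}_y$ regardless of $\theta$: under pooling the posterior equals the prior, so each $\delta_\alpha^*$ is a constant action ($\eta(\alpha)=\pm\infty$).
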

 
 \subsubsection{Design Levers}

The condition \eqref{eq:benign_pooling_condition_general} can be enforced
by modifying the primitives entering \eqref{eq:sender_payoff_gaussian}.
These interventions operate through the statistical properties of the
signal, the cost structure faced by the sender, and the composition of
the receiver population.

\paragraph{Detection (statistical separation).}
Using \eqref{eq:acceptance_prob_sender}, one can reduce acceptance
probabilities under $y \neq y_0$ by increasing the statistical
separability between benign and malicious semantics. This can be
interpreted as improving detection mechanisms such as spam filters or
LLM-based classifiers, which increase the gap between
$\bar{s}_{\alpha,\theta_{\mathrm{mal}}}(y)$ and
$\bar{s}_{\alpha,\theta_{\mathrm{legit}}}(y)$, or tightening the
decision thresholds $\eta(\alpha)$. As a result, phishing messages are
more likely to be rejected, lowering the payoff $U_S(\theta,y)$ for
$y \neq y_0$.

\paragraph{Cost shaping (adversarial friction).}
An alternative approach is to increase $C_\theta(y)$ in
\eqref{eq:sender_value_structured} for $y \neq y_0$, thereby directly
penalizing malicious semantic controls. In practice, this corresponds to
introducing guardrails or constraints in the generative process, such as
content moderation, rate limiting, or filtering mechanisms that make it
more difficult to produce phishing messages. These mechanisms increase
the effort required for adversaries to evade detection, effectively
raising the cost of malicious behavior and reducing the corresponding
payoff.

\paragraph{Population shaping (awareness and training).}
Finally, one can adjust the weighting function $w(\alpha)$ or the
distribution $F$ in \eqref{eq:sender_payoff_gaussian} to place greater
emphasis on high-awareness receivers. This corresponds to improving user
awareness through training, education, or interface design, such as
warning systems or explainable alerts. A more informed population reduces
the acceptance probability of malicious messages across a wider range of
receivers, thereby reinforcing detection and further diminishing the
incentive to deviate from $y_0$.

\subsubsection{Optimal population shaping under discrete receivers.}

Assume that the receiver population is discrete, with types
$\{\alpha_1,\dots,\alpha_N\} \subset \mathcal{A}$. The population
distribution is given by
\[
F = \sum_{n=1}^N p_n \, \delta_{\alpha_n}, 
\quad
p_n \ge 0,\;\; \sum_{n=1}^N p_n = 1,
\]
where $p = (p_1,\dots,p_N) \in \Delta^N$. Let $F_0$ denote a baseline
distribution with weights $p^0 = (p_1^0,\dots,p_N^0)$. 
Under this representation, the sender payoff
\eqref{eq:sender_payoff_gaussian} becomes
\[
U_S(\theta,y)
=
\sum_{n=1}^N p_n \, w(\alpha_n)
\Big(v_\theta(\alpha_n) - C_\theta(y)\Big)
P_{\alpha_n}(\theta,y),
\]
where $P_{\alpha_n}(\theta,y)$ is defined in
\eqref{eq:acceptance_prob_sender}.

\begin{definition}[Pooling feasibility under population shaping]
Given a baseline distribution $p^0 \in \Delta^N$, a distribution
$p \in \Delta^N$ is said to be \emph{pooling-feasible} if it satisfies
\begin{equation}
\sum_{n=1}^N p_n \, \Delta_n(\theta,y) \ge 0,
\quad
\forall \theta \in \Theta,\;\forall y \in \mathcal{Y},
\label{eq:pooling_constraint_discrete}
\end{equation}
where
\[
\Delta_n(\theta,y)
:=
w(\alpha_n)
\Big[
\big(v_\theta(\alpha_n) - C_\theta(y_0)\big)
P_{\alpha_n}(\theta,y_0)
-
\big(v_\theta(\alpha_n) - C_\theta(y)\big)
P_{\alpha_n}(\theta,y)
\Big].
\]
\end{definition}

The quantity $\Delta_n(\theta,y)$ represents the marginal incentive
advantage of the benign control $y_0$ over a deviation $y$ under receiver
type $\alpha_n$. When $\Delta_n(\theta,y) > 0$, the deviation is
disincentivized, whereas $\Delta_n(\theta,y) < 0$ indicates that it
remains profitable. This yields a natural interpretation in terms of receiver awareness.
High-awareness receivers induce low acceptance probabilities for
malicious messages, leading to $\Delta_n(\theta,y) > 0$, while
low-awareness receivers may satisfy $\Delta_n(\theta,y) < 0$ and thus
create incentives for deviation.

\begin{problem}[Optimal population shaping]
Given a baseline distribution $p^0 \in \Delta^N$, the platform seeks to
minimally perturb the population so as to enforce the benign pooling
condition. This leads to the constrained optimization problem
\begin{equation}
\begin{aligned}
\min_{p \in \Delta^N} \quad & D(p \,\|\, p^0) \\
\text{s.t.} \quad 
& \sum_{n=1}^N p_n \, \Delta_n(\theta,y) \ge 0,
\quad \forall \theta \in \Theta,\; \forall y \in \mathcal{Y},
\end{aligned}
\label{eq:population_shaping_problem}
\end{equation}
where $\Delta_n(\theta,y)$ is defined in
\eqref{eq:pooling_constraint_discrete}.

A particularly important choice is the Kullback--Leibler divergence
\[
D(p\|p^0) = \sum_{n=1}^N p_n \log \frac{p_n}{p_n^0},
\]
which yields the \emph{minimum-information} or \emph{minimum-distortion}
adjustment of the population distribution required to eliminate
profitable deviations. In this sense, the platform implements the
smallest intervention (in an information-theoretic sense) that enforces
incentive compatibility.
\end{problem}

 \begin{proposition}[Optimal population shaping under KL divergence]
Consider problem \eqref{eq:population_shaping_problem} with
\[
D(p\|p^0) = \sum_{n=1}^N p_n \log \frac{p_n}{p_n^0},
\]
where $p^0 \in \Delta^N$ satisfies $p_n^0 > 0$ for all $n$. Suppose that
the feasible set of \eqref{eq:population_shaping_problem} is nonempty.
Then:

\begin{enumerate}
\item[(i)] There exists a unique optimal solution $p^* \in \Delta^N$.

\item[(ii)] There exist nonnegative multipliers
$\{\lambda_{\theta,y}^* \ge 0 : \theta \in \Theta, y \in \mathcal{Y}\}$,
associated with the pooling constraints
\eqref{eq:pooling_constraint_discrete}, such that $p^*$ admits the
exponential tilting form
\begin{equation}
p_n^*
=
\frac{
p_n^0 \exp\!\big( \Lambda_n^* \big)
}{
\sum_{m=1}^N p_m^0 \exp\!\big( \Lambda_m^* \big)
},
\label{eq:exp_tilting_solution_kl_clean_refined}
\end{equation}
where
\[
\Lambda_n^*
=
\sum_{\theta \in \Theta}\sum_{y \in \mathcal{Y}}
\lambda_{\theta,y}^* \, \Delta_n(\theta,y).
\]

\item[(iii)] (\textbf{Semi-closed form dual characterization})
The multiplier vector $\lambda^* = (\lambda_{\theta,y}^*)$ is the unique solution
of the nonlinear system
\begin{equation}
\sum_{n=1}^N
p_n^0 \exp\!\big( \Lambda_n^* \big)\, \Delta_n(\theta,y)
=
0,
\quad \forall (\theta,y) \text{ such that } \lambda_{\theta,y}^* > 0,
\label{eq:dual_moment_condition}
\end{equation}
and equivalently satisfies the normalized moment condition
\begin{equation}
\sum_{n=1}^N p_n^* \Delta_n(\theta,y) = 0
\quad \text{for all binding constraints.}
\label{eq:moment_matching_condition}
\end{equation}

\item[(iv)] The multipliers $\lambda_{\theta,y}^*$ satisfy the complementary
slackness conditions
\[
\lambda_{\theta,y}^*
\Big(
\sum_{n=1}^N p_n^* \Delta_n(\theta,y)
\Big)
= 0,
\quad \forall \theta \in \Theta,\; y \in \mathcal{Y}.
\]

\item[(v)] The induced distribution
\(
F^* = \sum_{n=1}^N p_n^* \delta_{\alpha_n}
\)
satisfies the benign pooling condition
\eqref{eq:benign_pooling_condition_general}. Consequently, there exists a
Perfect Bayesian Nash Equilibrium
$(\mu^*, \{\delta_\alpha^*\}_{\alpha \in \mathcal{A}}, \rho^*)$
such that
\[
\mu^*(\theta) = y_0, \quad \forall \theta \in \Theta.
\]
\end{enumerate}
\end{proposition}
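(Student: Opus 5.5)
The plan is to treat \eqref{eq:population_shaping_problem} as a finite-dimensional convex program and read off (i)--(iv) from strict convexity and the KKT conditions, then deduce (v) from feasibility of $p^*$ together with the Benign pooling equilibrium proposition.

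\textbf{Existence and uniqueness (i).} The feasible set is
$\Delta^N\cap\{p:\sum_n p_n\Delta_n(\theta,y)\ge 0\ \forall(\theta,y)\}$. It is the intersection of the compact simplex with finitely many closed half-spaces, since $\Theta$ and $\mathcal{Y}$ are finite. It is therefore compact and convex, and by assumption nonempty. Because $p_n^0>0$, the map $p\mapsto\sum_n p_n\log(p_n/p_n^0)$ (with $0\log 0=0$) is continuous and strictly convex on $\Delta^N$. Hence a minimizer exists by Weierstrass and is unique by strict convexity.

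\textbf{Exponential tilting (ii).} All constraints are linear, so Slater's condition is not needed: for linear constraints, KKT multipliers exist at the optimum of a convex problem whenever the optimal value is finite. First I would show $p^*_n>0$ for every $n$. The derivative of $p_n\log p_n$ tends to $-\infty$ as $p_n\downarrow 0$, so moving mass toward a zero coordinate strictly decreases the objective at first order. This argument needs care, because the move must keep feasibility. I will handle this by moving along the segment toward any feasible point with full support. If no feasible point has full support, I will restrict to the face supported by feasible points and state the tilting on that support. I will flag this subtlety as the main obstacle.

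With interior positivity in hand, I form the Lagrangian
\[
\mathcal{L}=\sum_n p_n\log\frac{p_n}{p_n^0}-\sum_{\theta,y}\lambda_{\theta,y}\sum_n p_n\Delta_n(\theta,y)+\nu\Big(\sum_n p_n-1\Big).
\]
Stationarity gives $\log(p_n/p_n^0)+1-\Lambda_n+\nu=0$. Hence $p_n\propto p_n^0 e^{\Lambda_n}$, and normalizing yields \eqref{eq:exp_tilting_solution_kl_clean_refined}.

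\textbf{Dual characterization and complementary slackness (iii)--(iv).} Complementary slackness (iv) is part of the KKT system. Substituting the tilted form into each binding constraint gives \eqref{eq:moment_matching_condition}. Multiplying through by the positive normalizer $\sum_m p_m^0e^{\Lambda_m^*}$ gives \eqref{eq:dual_moment_condition}.

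For uniqueness of $\lambda^*$ I would use the dual function
\[
g(\lambda)=-\log\sum_n p_n^0 e^{\sum\lambda_{\theta,y}\Delta_n(\theta,y)}.
\]
This is concave, and strictly concave along directions $d$ for which $\sum d_{\theta,y}\Delta_n(\theta,y)$ is not constant in $n$. Uniqueness therefore requires the binding constraint vectors $\Delta_\cdot(\theta,y)$ to be affinely independent modulo constants. This is a genuine nondegeneracy hypothesis that I would state explicitly. Without it, only $\Lambda^*$, and hence $p^*$, is unique.

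\textbf{Conclusion (v).} Since $p^*$ is feasible, $\sum_n p_n^*\Delta_n(\theta,y)\ge 0$ for all $(\theta,y)$. Writing this out, it says
\[
U_S(\theta,y_0)-U_S(\theta,y)\ge 0
\]
under $F^*$, which is exactly \eqref{eq:benign_pooling_condition_general}. The Benign pooling equilibrium proposition then yields the PBNE with $\mu^*(\theta)=y_0$.
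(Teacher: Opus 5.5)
Your plan follows the paper's proof: strict convexity of $D(\cdot\|p^0)$ on the compact convex feasible set gives (i). The exponentiated and normalized Lagrangian stationarity condition gives (ii). Substituting the tilted form into the binding constraints, together with KKT complementary slackness, gives (iii)--(iv). Feasibility of $p^*$, which is exactly $U_S(\theta,y_0)\ge U_S(\theta,y)$ under $F^*$, combined with the benign-pooling proposition gives (v).

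The caveats you add are genuine defects in the stated result that the paper's proof passes over, not gaps in your argument:
\begin{itemize}
\item The paper writes the log-stationarity equation without showing $p_n^*>0$. Statement (ii) therefore needs a feasible point of full support; otherwise the tilting form holds only on the support of $p^*$, as you propose.
\item Obtain the multipliers only after your positivity or face-restriction step. Your claim that linear constraints alone guarantee KKT multipliers whenever the optimal value is finite fails for the KL objective at a boundary optimum, where its slope is infinite.
\item The paper asserts uniqueness of $\lambda^*$ without proof, and it fails in general. Since $\Delta_n(\theta,y_0)\equiv 0$, the multiplier $\lambda^*_{\theta,y_0}$ is arbitrary. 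Also, $\lambda=0$ vacuously satisfies the system in (iii). Your dual strict-concavity argument under a nondegeneracy hypothesis is the right repair.
\end{itemize}
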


\begin{proof}
Since $p_n^0 > 0$ for all $n$, the Kullback--Leibler divergence
$D(p\|p^0)$ is strictly convex on $\Delta^N$. The feasible set is a
nonempty, closed, and convex subset of the compact simplex $\Delta^N$,
hence a unique optimizer $p^*$ exists. 
Introduce Lagrange multipliers $\lambda_{\theta,y} \ge 0$ for each
constraint
\(
\sum_{n} p_n \Delta_n(\theta,y) \ge 0
\)
and $\nu$ for the normalization constraint. The Lagrangian is
\[
\mathcal{L}(p,\lambda,\nu)
=
\sum_{n=1}^N p_n \log \frac{p_n}{p_n^0}
-
\sum_{\theta,y} \lambda_{\theta,y}
\Big( \sum_{n=1}^N p_n \Delta_n(\theta,y) \Big)
+
\nu \Big( \sum_{n=1}^N p_n - 1 \Big).
\]

The first-order optimality condition yields
\[
\log \frac{p_n^*}{p_n^0} + 1
=
\sum_{\theta,y} \lambda_{\theta,y}^* \Delta_n(\theta,y)
+ \nu^*,
\quad \forall n.
\]

Exponentiating and normalizing over $n$ gives the exponential tilting form
\eqref{eq:exp_tilting_solution_kl_clean_refined}, where the normalization
constant is
\[
Z(\lambda^*) = \sum_{m=1}^N p_m^0 \exp\!\big( \Lambda_m^* \big),
\quad
\nu^* = 1 - \log Z(\lambda^*).
\]

Substituting \eqref{eq:exp_tilting_solution_kl_clean_refined} into the
constraint functions yields
\[
\sum_{n=1}^N p_n^* \Delta_n(\theta,y)
=
\frac{
\sum_{n=1}^N p_n^0 \exp\!\big( \Lambda_n^* \big) \Delta_n(\theta,y)
}{
Z(\lambda^*)
}.
\]

Hence, for each binding constraint, the numerator must vanish, which gives
the system \eqref{eq:dual_moment_condition}. Dividing by $Z(\lambda^*)$
yields the equivalent normalized moment condition
\eqref{eq:moment_matching_condition}. This establishes the semi-closed
form characterization of $\lambda^*$. 
Complementary slackness follows from the KKT conditions. The equilibrium
implication follows from the feasibility of $p^*$ and the satisfaction of
the benign pooling constraints.
\end{proof}

Under the optimal solution \eqref{eq:exp_tilting_solution_kl_clean_refined}, the
population is reweighted as $p_n^* \propto p_n^0 \exp(\Lambda_n^*)$, where
$\Lambda_n^* = \sum_{\theta,y} \lambda_{\theta,y}^* \Delta_n(\theta,y)$.
Thus, receiver types that more strongly discourage deviations (i.e.,
with larger $\Delta_n(\theta,y)$) are amplified, while those that enable
deviations are down-weighted, shifting the effective population toward
more robust receivers.

The quantity $\Delta_n(\theta,y)$ can be improved through defensive
mechanisms. Training increases user awareness, reducing acceptance of
malicious messages and raising $\Delta_n(\theta,y)$. Interface design,
such as warnings or decision aids, lowers the acceptance probability
$P_{\alpha_n}(\theta,y)$ and thereby increases $\Delta_n(\theta,y)$.
Segmentation or filtering reduces the influence of receiver types with
$\Delta_n(\theta,y) < 0$ by limiting their exposure.

\section{Extensions}

\subsection{Two-sided scoring model}
We assume that, from the receiver's perspective, the null hypothesis
$H_0$ and the associated scoring rule are fixed across messages. In the
baseline model, the receiver evaluates a message using a keyword-based
scoring function that identifies suspicious content.
In practice, receivers may maintain both (i) a list of suspicious
keywords that trigger alerts and (ii) a list of indicative keywords that
signal legitimacy. In the current formulation, we focus only on the
former: keywords that raise suspicion.

The receiver generates an alert when the message contains a token from a
predefined suspicious keyword set. Thus detection is driven by
alert-triggering terms. An alternative behavioral model is that some
receivers may also interpret the \emph{absence} of expected legitimate
keywords as evidence of deception. In such cases, both the presence of
suspicious terms and the absence of credibility-inducing terms contribute
to the detection decision.

\subsection{Systematic Blindness and Information Loss}

We formalize the notion of \emph{systematic blindness} as a structural
limitation in the receiver’s perceptual representation, and characterize
its implications for information loss and learning dynamics.

\paragraph{Informative features.}
Let $\theta \in \{\text{legit}, \text{mal}\}$ denote the sender type and
define the set of informative features
\[
\mathcal{I}
=
\left\{
w \in V :
P(w \mid \theta = \text{mal})
>
P(w \mid \theta = \text{legit})
\right\}.
\]
These are the features that carry statistical evidence of malicious intent.

\begin{definition}[Systematic Blindness]
A receiver $\alpha$ is said to exhibit \emph{systematic blindness} if
\(
\mathcal{I} \setminus K_\alpha \neq \emptyset.
\)
Equivalently, there exists a nonempty set of informative features
\(
\Delta K_\alpha := \mathcal{I} \setminus K_\alpha
\) 
that are never observed by the receiver and therefore do not contribute
to the score $S_\alpha(m)$.
\end{definition}

Systematic blindness is a structural property: the receiver does not
misinterpret certain features; rather, these features are entirely absent
from the receiver’s representation. Let $V$ denote the token space and
$K_\alpha \subseteq V$ the receiver’s accessible feature set.

This induces a projection operator at the token level,
\(
\pi_\alpha : V \to K_\alpha \cup \{\varnothing\},
\)
which filters out features not contained in $K_\alpha$. This projection
extends componentwise to messages of length $L$, yielding
\(
\Pi_\alpha : V^L \to K_\alpha^L, \ 
\Pi_\alpha(m) = (\pi_\alpha(w_1), \dots, \pi_\alpha(w_L)).
\)

The receiver therefore evaluates the score $S_\alpha(m)$ based on the
projected message $\Pi_\alpha(m)$ rather than the original message $m$. 
Systematic blindness thus induces a loss of statistical information about
the sender type, as components of the message that may be informative for
discrimination are removed prior to evaluation.

\begin{definition}[Effective Information Set]
The effective information available to receiver $\alpha$ is
\(
\mathcal{I}_\alpha := \mathcal{I} \cap K_\alpha.
\)
\end{definition}

\begin{proposition}[Information Loss under Blindness]
If $\alpha$ is systematically blind, then
\(
\mathcal{I}_\alpha \subsetneq \mathcal{I},
\)
and the resulting score distribution satisfies
\[
\bar{s}_\alpha(y)
=
\sum_{k=1}^L P(W_k \in K_\alpha \mid y)
<
\sum_{k=1}^L P(W_k \in \mathcal{I} \mid y).
\]
\end{proposition}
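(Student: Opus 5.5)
The plan is to prove the two claims separately: first the set-theoretic strict inclusion, then the inequality for the mean score. The inequality will be obtained by linearity of expectation under the indicator scoring rule $\psi_\alpha(w)=\mathbf{1}\{w\in K_\alpha\}$, together with a disjoint decomposition of $\mathcal{I}$.

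\emph{Step 1 (strict inclusion).} By definition $\mathcal{I}_\alpha=\mathcal{I}\cap K_\alpha\subseteq\mathcal{I}$. Systematic blindness supplies a token $w^\dagger\in\mathcal{I}\setminus K_\alpha=\Delta K_\alpha$. This token lies in $\mathcal{I}$ but not in $\mathcal{I}_\alpha$, so $\mathcal{I}_\alpha\subsetneq\mathcal{I}$.

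\emph{Step 2 (mean score).} From \eqref{eq:token_score} with the indicator $\psi_\alpha$, linearity of expectation under $\mathbb{P}_y$ gives
\[
\bar{s}_\alpha(y)=\sum_{k=1}^L\mathbb{P}_y(W_k\in K_\alpha),
\]
exactly as in Proposition~\ref{prop:keyword_awareness_strong}.

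\emph{Step 3 (comparison).} Since $\mathcal{I}=\mathcal{I}_\alpha\,\dot\cup\,\Delta K_\alpha$ is a disjoint union, for each $k$
\[
\mathbb{P}_y(W_k\in\mathcal{I})=\mathbb{P}_y(W_k\in\mathcal{I}_\alpha)+\mathbb{P}_y(W_k\in\Delta K_\alpha).
\]
To compare the left side with $\mathbb{P}_y(W_k\in K_\alpha)$, I need $K_\alpha\cap\mathcal{I}^c$ to contribute nothing. I will therefore make explicit the assumption implicit in the statement: the receiver's alert tokens are informative, i.e.\ $K_\alpha\subseteq\mathcal{I}$. This is the natural reading of the keyword example, and under it $\mathcal{I}_\alpha=K_\alpha$. (Alternatively, one can read $\bar{s}_\alpha$ as counting only the informative part $K_\alpha\cap\mathcal{I}$.) Summing over $k$ then gives
\[
\sum_k\mathbb{P}_y(W_k\in\mathcal{I})=\bar{s}_\alpha(y)+\sum_k\mathbb{P}_y(W_k\in\Delta K_\alpha),
\]
so the weak inequality $\le$ follows at once.

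\emph{Step 4 (strictness), the main obstacle.} Strict inequality needs $\sum_k\mathbb{P}_y(W_k\in\Delta K_\alpha)>0$, and nonemptiness of $\Delta K_\alpha$ alone does not give this. I would use the chain rule \eqref{eq:llm_chain_rule_formal} together with the standard property of LLM softmax kernels that $\pi_y(w\mid h)>0$ for all $w$ and $h$. Then $\mathbb{P}_y(W_1=w^\dagger)=\pi_y(w^\dagger\mid\emptyset)>0$, which gives strictness. If full support is not assumed, I would instead add the hypothesis that some informative-but-unperceived token has positive probability of appearing under $y$, and state that the inequality is strict under exactly that condition.
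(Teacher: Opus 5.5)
Your argument is the paper's own: strict inclusion follows at once from $\mathcal{I}\setminus K_\alpha\neq\emptyset$, and the inequality comes from the indicator score counting only tokens in $K_\alpha$, so it misses the probability mass on $\Delta K_\alpha$. The paper compresses that second step into one line, but both hypotheses you add are genuinely necessary. With $V=\{a,b\}$, $\mathcal{I}=\{b\}$, $K_\alpha=\{a\}$ and $\mathbb{P}_y(W_k=a)=0.9$, the inequality reverses. Even when $K_\alpha\subseteq\mathcal{I}$, it becomes an equality if $\mathbb{P}_y(W_k\in\Delta K_\alpha)=0$ for all $k$. So your version, which assumes informative alert tokens plus a blind-spot token of positive probability (e.g.\ via full-support softmax kernels), is the correct form of the result.
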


\begin{proof}
Since $\mathcal{I}_\alpha = \mathcal{I} \cap K_\alpha$ and
$\mathcal{I} \setminus K_\alpha \neq \emptyset$, we have
$\mathcal{I}_\alpha \subsetneq \mathcal{I}$. The score aggregates only
features in $K_\alpha$, hence excludes contributions from
$\mathcal{I} \setminus K_\alpha$, which are informative but unobserved.
\end{proof}

Information loss manifests as a degradation in statistical
discriminability. In particular, the mean separation between malicious
and legitimate signals decreases,
\(
\mu_{\alpha,\text{mal}} - \mu_{\alpha,\text{legit}} \downarrow,
\)
which directly weakens the distinguishability of the two distributions.
At the same time, the signal-to-noise ratio in the Gaussian approximation
\(
S_\alpha \approx \mathcal{N}(\mu_\alpha, \sigma_\alpha^2)
\) 
is reduced, as informative components of the signal are systematically
excluded. As a consequence, the overlap between the score distributions
increases, leading to a higher acceptance probability for malicious
messages under the receiver’s decision rule.

 \begin{remark}[Vocabulary-Induced Blindness]
Systematic blindness can be interpreted as an epistemic limitation induced
by the receiver's vocabulary. A receiver of type $\alpha$ processes only
tokens in its accessible feature set $K_\alpha \subset V$. Consequently,
the effective information set is
$\mathcal{I}_\alpha=\mathcal{I}\cap K_\alpha$, and any informative feature
in $\mathcal{I}\setminus K_\alpha$ is not merely misinterpreted but
entirely unrepresented in the receiver's perceptual model.

This phenomenon reflects a structural constraint on knowledge: the
receiver's ability to infer the sender's type is bounded by the expressive
capacity of its feature space. In this sense, the vocabulary $K_\alpha$
defines the observable world of the receiver. Signals outside this space
are epistemically inaccessible, regardless of their statistical
informativeness. This corresponds to a form of representational
incompleteness in statistical decision-making, where sufficient statistics
are restricted by the class of models \cite{cover1999elements}.

This perspective is closely aligned with Wittgenstein's insight that ``the
limits of my language mean the limits of my world''
\cite{wittgenstein1922tractatus}. Here, language, formalized as the set of
features $K_\alpha$, acts as a projection operator that determines which
aspects of the message are observable and meaningful. In this sense, the
receiver operates under a constrained $\sigma$-algebra of observables,
inducing an endogenous information structure.

Systematic blindness arises when the true informative structure of the
environment exceeds this representational boundary. This phenomenon is
closely related to bounded rationality \cite{simon1957models}, where
decision-makers operate under limited cognitive and informational
resources, and to model misspecification in statistical inference
\cite{white1982maximum}, where the true data-generating process lies
outside the assumed model class.

Expanding $K_\alpha$ through learning, training, or representation enrichment enlarges the
effective observable space, thus refining the induced information structure and reducing
epistemic blindness. In this sense, learning can be interpreted as an expansion of the
receiver’s measurable feature space, analogous to refining partitions in information theory
or increasing the expressiveness of hypothesis classes in statistical learning.
\end{remark}

 \subsection{Mindset Dynamics}\label{sec:mindset-dynamics}
We model the evolution of the receiver’s mindset as a feature expansion
process. At time $t$, the receiver $\alpha$ is characterized by a 
set of characteristics $K_\alpha(t) \subseteq V$, which determines the components of the
message that are perceptible and contribute to the score
\[
S_\alpha(m;t) = \sum_{k=1}^L \mathbf{1}\{w_k \in K_\alpha(t)\}.
\]

\begin{definition}[Mindset Dynamics]
The feature set evolves according to
\[
K_\alpha(t+1)
=
K_\alpha(t)
\cup
\mathcal{L}_\alpha(t),
\]
where $\mathcal{L}_\alpha(t) \subseteq V$ denotes newly acquired features.
\end{definition}

The update set $\mathcal{L}_\alpha(t)$ captures the accumulation of
experience and information, arising from exposure, feedback, or external
signals. The update is cumulative, implying
\(
K_\alpha(t) \subseteq K_\alpha(t+1),
\)
so that the feature set grows monotonically over time. Consequently, the
effective information set
\(
\mathcal{I}_\alpha(t) = \mathcal{I} \cap K_\alpha(t)
\)
also expands, increasing the portion of informative features that the
receiver can exploit.

Under the Gaussian approximation
\(
S_\alpha(m;t) \approx \mathcal{N}(\mu_\alpha(t), \sigma_\alpha^2(t)),
\)
this expansion increases the mean separation between legitimate and
malicious messages, thereby improving detection performance. In
particular, if $\mathcal{L}_\alpha(t) \cap \mathcal{I} \neq \emptyset$
infinitely often, then the blind region shrinks over time, and vanishes
asymptotically if
\(
\lim_{t \to \infty} K_\alpha(t) \supseteq \mathcal{I}.
\)

This improvement is countered by strategic adaptation. Since the sender
controls the distribution $P(\cdot \mid y)$, it can shift probability
mass toward features outside the current $K_\alpha(t)$, preserving
evasion. The resulting interaction forms a co-evolutionary dynamic
\[
K_\alpha(t) \uparrow
\quad \Longleftrightarrow \quad
P(\cdot \mid y_t) \text{ shifts outward},
\]
in which learning expands the perceptual boundary while the sender
continuously relocates signals beyond it.

\section{Case Study and Numerical Experiments}\label{sec:case}
\newcommand{\CaseSamples}{10000}
\newcommand{\CaseLength}{96}
\newcommand{\CaseCLTMean}{34.41}
\newcommand{\CaseCLTVariance}{26.02}
\newcommand{\CaseCLTSkewness}{0.10}
\newcommand{\CaseCLTKurtosis}{-0.04}
\newcommand{\CaseCLTKS}{0.016}
\newcommand{\CaseGuardrailPenalty}{0.10}
\newcommand{\CaseCriticalNoPenalty}{0.48}
\newcommand{\CaseCriticalGuardrail}{0.18}
\newcommand{\CaseBaselineGap}{0.16}
\newcommand{\CaseBaselineGapGuardrail}{0.06}
\newcommand{\CaseMaxAcceptanceError}{0.010}
\newcommand{\CaseMeanAcceptanceError}{0.002}
\newcommand{\CaseMindsetInitialFixed}{0.572}
\newcommand{\CaseMindsetFinalFixed}{0.153}
\newcommand{\CaseMindsetPeakAdaptive}{0.960}
\newcommand{\CaseMindsetFinalAdaptive}{0.209}

We now present a reproducible numerical case study that illustrates the
main analytical results. The experiment is a controlled synthetic
LLM-channel model: each semantic control induces a distribution over
token-feature categories, and receiver awareness determines which
features are scored. This construction is not intended to benchmark a
particular deployed LLM. Its purpose is to isolate the mechanisms in
Sections~\ref{sec:llm-game}--\ref{sec:pbne}: Gaussian score
approximation, threshold-based detection, awareness-dependent
acceptance, mindset dynamics, and incentive design for benign pooling.
The numerical data and figures are generated by
\texttt{run\_case\_study\_experiments.py}.

\subsection{Experimental Setup}

The sender chooses one of three semantic controls,
\[
Y=\{y_0,y_1,y_2\}
=
\{\text{benign},\text{aggressive},\text{stealth}\}.
\]
The benign control represents routine administrative communication, the
aggressive control represents an explicit phishing attempt with strong
urgency and credential cues, and the stealth control represents a more
subtle phishing message that reduces explicit alert terms while
retaining enough action-oriented language to remain persuasive.

Each generated message has length \(L=\CaseLength\). At each token
position, the controlled channel draws one of eight bounded feature
categories: neutral text, legitimate-context cues, urgency, action
requests, credential requests, links or attachments, authority
impersonation, and contextual mismatch. These categories are a
low-dimensional proxy for the token-level features in
\eqref{eq:token_score}. The three controls differ only through their
feature probabilities, so the experiment directly implements the
controlled stochastic kernels in \eqref{eq:llm_chain_rule_formal}.
The vocabulary used to instantiate these categories is seeded by a
public phishing keyword list in the Hunting-Lists repository
\cite{cyb3rmik3_phishing_keywords_2023}. We group that list into
semantic cue families, including urgency, account status, password and
verification language, document-sharing language, invoice or payment
references, message notifications, and service-request language. The
list is used only to make the synthetic channel interpretable; it is not
treated as an empirical distribution over real phishing emails.

Figure~\ref{fig:case_feature_probabilities} reports the feature
probabilities used by the three controls. The benign control places most
mass on neutral and legitimate-context tokens. The aggressive control
places substantially more mass on urgency, action requests, credential
requests, and links. The stealth control is intermediate: it suppresses
the most explicit indicators relative to the aggressive control, but
keeps enough action-oriented, authority-framed, and context-mismatch
features to remain strategically useful.

\begin{figure}[!htbp]
\centering
\includegraphics[width=0.88\linewidth]{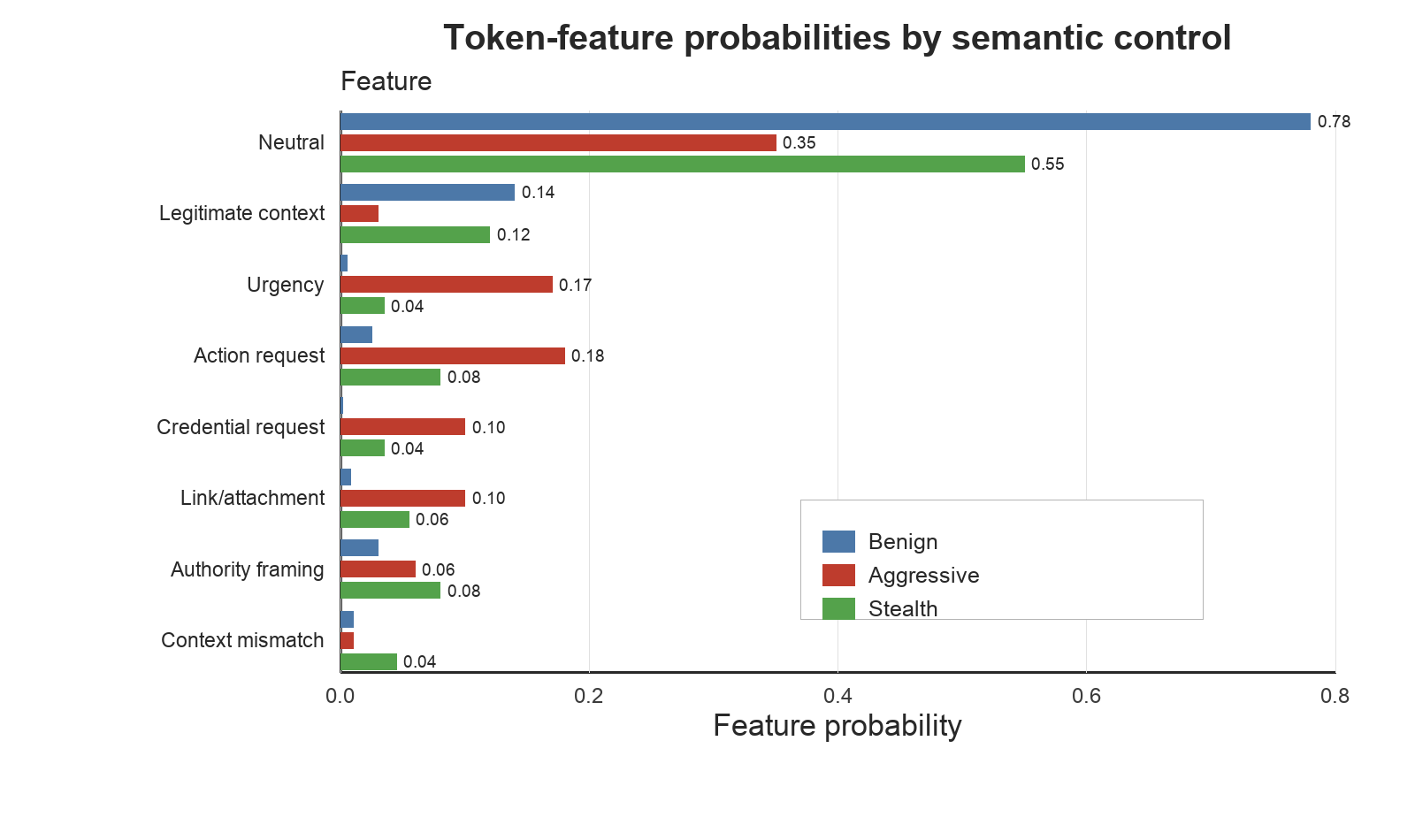}
\caption{Feature-category probabilities induced by the three semantic
controls. The stealth control reduces explicit phishing cues while
retaining weaker authority, action, and mismatch cues.}
\label{fig:case_feature_probabilities}\vspace{-2mm}
\end{figure}
\FloatBarrier

To make the controls concrete, the following quote blocks show
representative messages generated by instantiating the feature
categories with short phrase templates. These examples are not intended
to benchmark a deployed LLM; they illustrate how the abstract semantic
controls translate into messages with different observable cues.

\begin{quote}
\small
\textbf{Benign.} ``The operations team has posted the monthly reimbursement reminder and service request summary in the employee portal. Please review the notice when convenient and submit any corrections through the standard service desk form. No immediate action is required if your record is already accurate; this note is only intended to keep the directory and reimbursement workflow current.''

\emph{Salient cues:} routine administrative context; low urgency; standard internal workflow; no credential request; no account-loss threat.
\end{quote}
\begin{quote}
\small
\textbf{Aggressive.} ``Action required: your payroll account is scheduled for suspension today because the verification record is incomplete. Open the secure payroll validation page and confirm your password and MFA code before the close of business to prevent loss of access to direct deposit and benefits. Failure to complete this step may delay your next payment.''

\emph{Salient cues:} high urgency; explicit credential request; link-based action; threat of account suspension and payment disruption.
\end{quote}
\begin{quote}
\small
\textbf{Stealth.} ``Hi, I am reconciling the team directory before tomorrow's audit and noticed that a few contact records still need confirmation. When you have a moment, please review the internal request form and confirm whether the listed phone number, department, and backup contact are current. This will help the support team close the ticket before the audit packet is finalized.''

\emph{Salient cues:} mild urgency; audit-related authority framing; indirect action request; plausible business context; subtle mismatch in process.
\end{quote}

\FloatBarrier

Receivers have awareness types
\[
\mathcal{A}
=
\{\alpha_{\mathrm{naive}},
\alpha_{\mathrm{mid}},
\alpha_{\mathrm{aware}}\}.
\]
The scoring maps are nested. Naive receivers score only explicit cues
such as urgency, credential requests, and links. Intermediate receivers
also score action requests, authority framing, and weak contextual
signals. Aware receivers score the full feature set, including subtle
contextual mismatch. Thus higher awareness enriches the feature map
\(\Psi_\alpha\), as in Proposition~\ref{prop:keyword_awareness_strong}.

For each semantic control and awareness type, we generate
\(\CaseSamples\) messages and compute the cumulative score
\[
S_{\alpha,L}=\sum_{k=1}^{L}\psi_\alpha(W_k).
\]
The receiver uses the Gaussian threshold rule in
Proposition~\ref{prop:threshold_policy}. For each \(\alpha\), the
threshold is calibrated as
\[
\eta_\alpha
=
\frac{1}{2}
\left(
\widehat{\mu}_{\alpha,\mathrm{ben}}
+
\widehat{\mu}_{\alpha,\mathrm{mal}}
\right),
\]
where \(\widehat{\mu}_{\alpha,\mathrm{ben}}\) is the empirical mean
under \(y_0\), and \(\widehat{\mu}_{\alpha,\mathrm{mal}}\) is the
average empirical mean under \(y_1\) and \(y_2\). A message is accepted
when \(S_{\alpha,L}\le \eta_\alpha\) and blocked otherwise.

\subsection{Gaussian Score Approximation}

The first experiment tests Proposition~\ref{prop:score_clt}. We focus
on the stealth control under the aware receiver because this is the
hardest detection case: explicit phishing cues are muted, while subtle
features remain visible only to high-awareness receivers. Across
\(\CaseSamples\) simulated messages, the score has empirical mean
\(\CaseCLTMean\) and variance \(\CaseCLTVariance\). After
standardization, the skewness is \(\CaseCLTSkewness\), the excess
kurtosis is \(\CaseCLTKurtosis\), and the Kolmogorov distance from the
standard normal CDF is \(\CaseCLTKS\).

\begin{figure}[!htbp]
\centering
\includegraphics[width=0.80\linewidth]{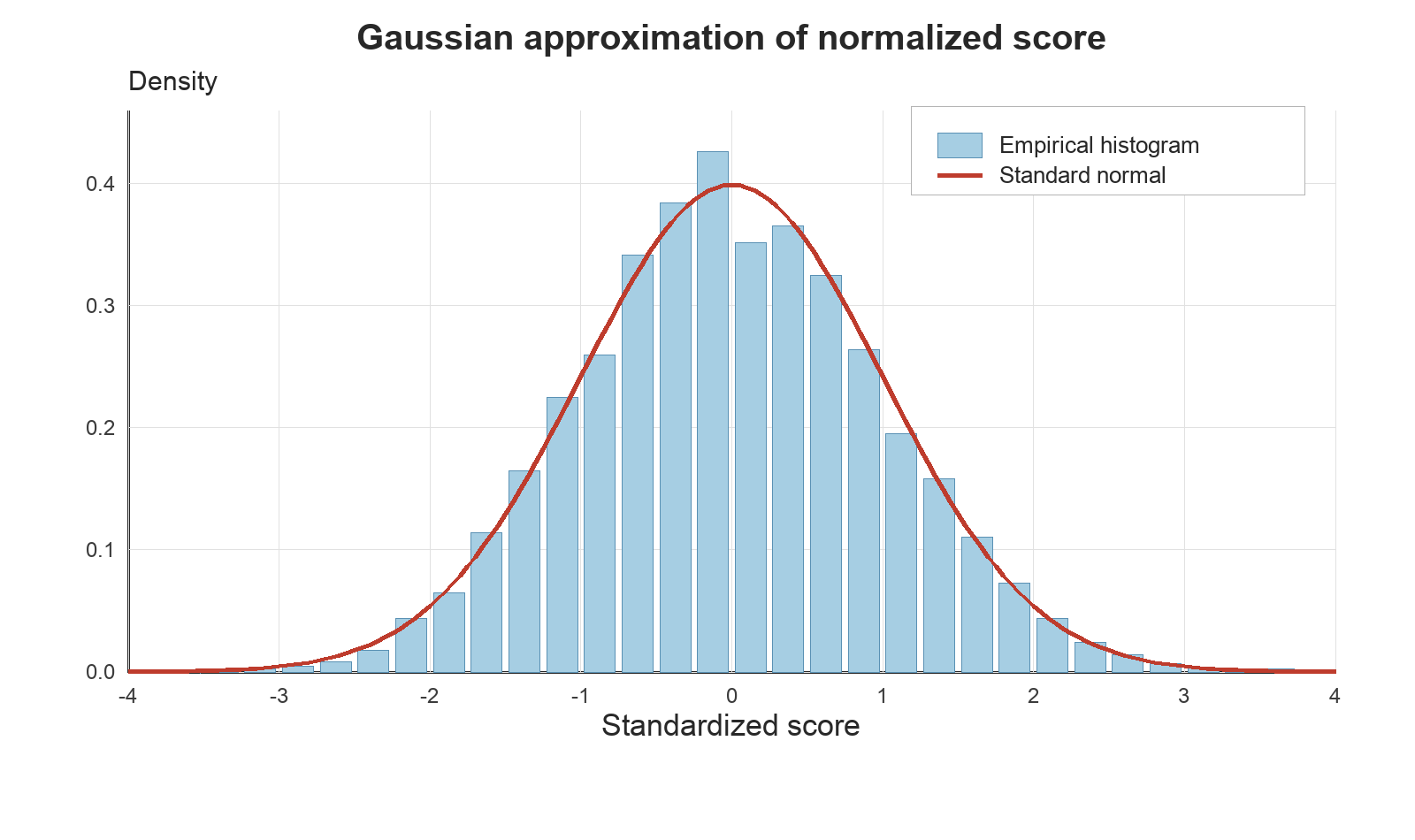}
\caption{Gaussian approximation for the normalized cumulative score
under the stealth semantic control and aware receiver. The close match
between the empirical histogram and the standard normal density
supports the CLT approximation in Proposition~\ref{prop:score_clt}.}
\label{fig:case_clt_validation}\vspace{-2mm}
\end{figure}

Figure~\ref{fig:case_clt_validation} shows that the standardized score
is close to Gaussian. This supports the use of the normal approximation
in \eqref{eq:score_distribution} and justifies the closed-form
acceptance probability in \eqref{eq:acceptance_probability}.

\subsection{Threshold Detection and Awareness Ordering}

Table~\ref{tab:case_scores} reports the empirical score means, standard
deviations, thresholds, and acceptance probabilities for each semantic
control and awareness type. The benign control is accepted with
probability one in the simulation, while the aggressive phishing control
is rejected for all awareness levels. The stealth control is the
strategically interesting case: it is accepted by naive receivers with
probability \(0.541\), by intermediate receivers with probability
\(0.297\), and by aware receivers with probability \(0.157\).

\begin{table}[!htbp]
\centering
\caption{Estimated score statistics and empirical acceptance probabilities.}
\label{tab:case_scores}
\scriptsize
\begin{tabular}{llrrrr}
\hline
Control & Awareness & Mean & SD & $\eta_\alpha$ & Accept \\
\hline
Benign & Naive & 3.33 & 1.52 & 18.64 & 1.000 \\
Benign & Intermediate & 5.97 & 2.16 & 24.74 & 1.000 \\
Benign & Aware & 7.85 & 2.76 & 29.25 & 1.000 \\
Aggressive & Naive & 49.54 & 4.87 & 18.64 & 0.000 \\
Aggressive & Intermediate & 59.91 & 5.03 & 24.74 & 0.000 \\
Aggressive & Aware & 66.91 & 5.47 & 29.25 & 0.000 \\
Stealth & Naive & 18.36 & 3.60 & 18.64 & 0.541 \\
Stealth & Intermediate & 27.10 & 4.24 & 24.74 & 0.297 \\
Stealth & Aware & 34.41 & 5.10 & 29.25 & 0.157 \\
\hline
\end{tabular}
\end{table}

The Gaussian approximation also predicts acceptance probabilities
accurately in this experiment. Across all control-awareness pairs, the
maximum absolute gap between empirical acceptance and the Gaussian
prediction is \(\CaseMaxAcceptanceError\), and the mean absolute gap is
\(\CaseMeanAcceptanceError\).

\begin{figure}[!htbp]
\centering
\includegraphics[width=0.82\linewidth]{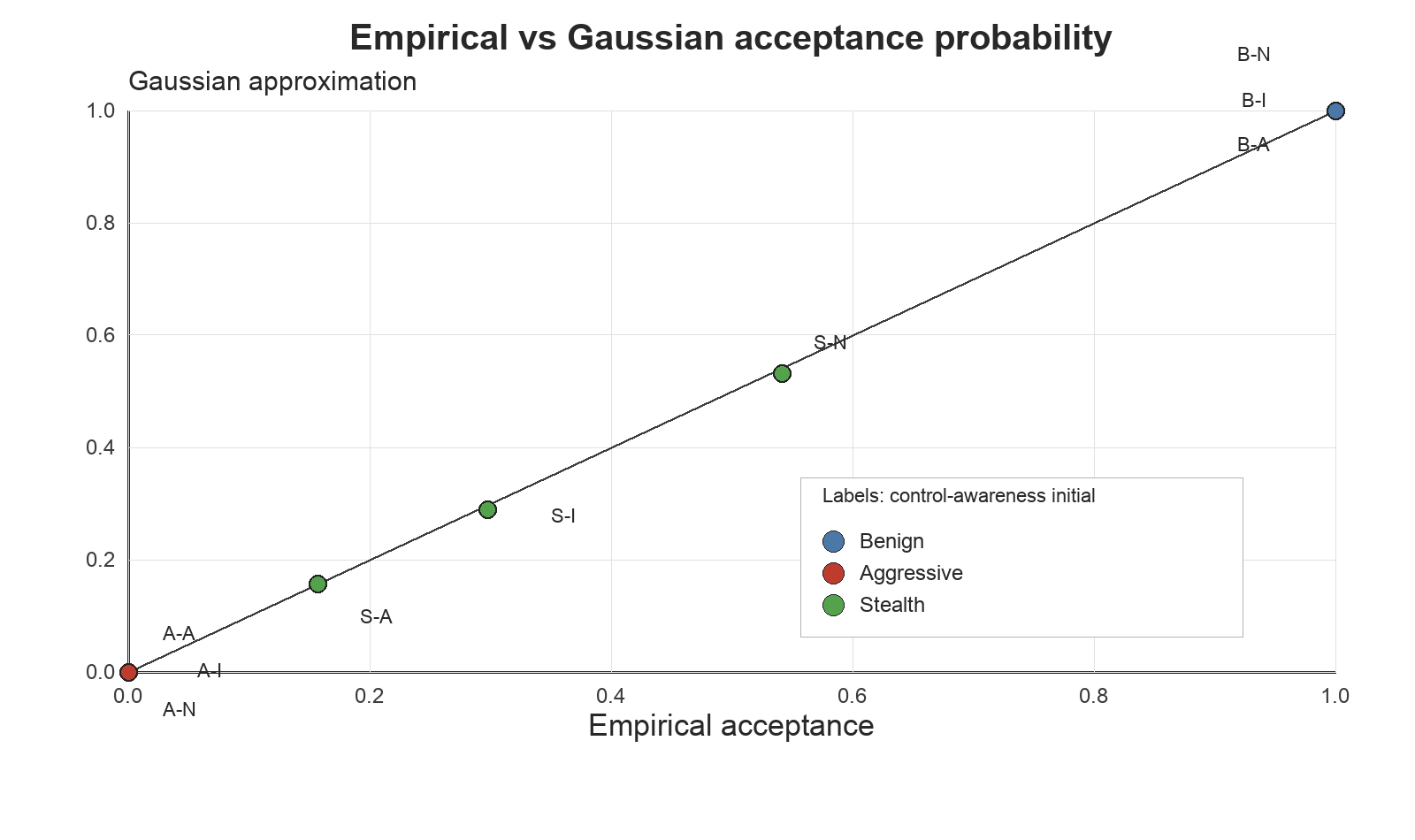}
\caption{Empirical acceptance probabilities versus Gaussian
approximations for all semantic-control and awareness-type pairs. Points
near the diagonal indicate that the closed-form approximation in
\eqref{eq:acceptance_probability} tracks the Monte Carlo estimates.}
\label{fig:case_acceptance_calibration}\vspace{-2mm}
\end{figure}

\begin{figure}[!htbp]
\centering
\includegraphics[width=0.80\linewidth]{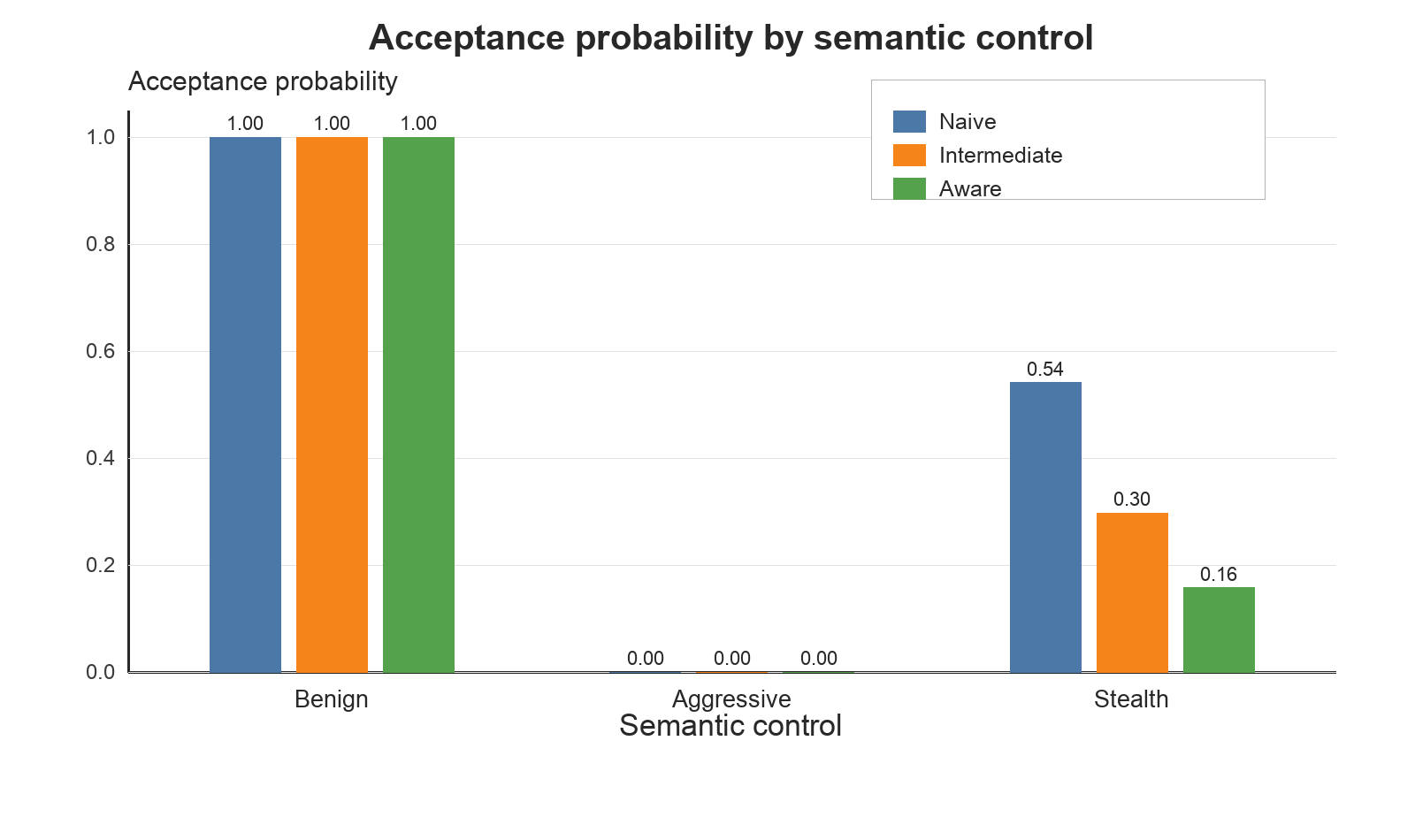}
\caption{Empirical acceptance probability across semantic controls and
awareness types. Stealth phishing selectively evades lower-awareness
receivers, while increased awareness reduces acceptance, illustrating
the behavioral ordering in Proposition~\ref{prop:awareness_order}.}
\label{fig:case_acceptance_awareness}\vspace{-2mm}
\end{figure}

Figure~\ref{fig:case_acceptance_awareness} visualizes the same effect.
The result illustrates how semantic control changes the score
distribution and how awareness reshapes the receiver's decision rule.
Aggressive phishing is easy to detect because it shifts the score far
above the threshold. Stealth phishing instead moves the score close to
the naive threshold while remaining more visible to richer feature maps.
Thus the experiment supports the acceptance-ordering part of
Proposition~\ref{prop:awareness_order}: higher awareness decreases the
acceptance probability of malicious semantic controls.

\FloatBarrier

\subsection{Mindset Dynamics and Adaptive Evasion}

We next instantiate the mindset dynamics model in
Section~\ref{sec:mindset-dynamics}. The receiver begins with an explicit
cue vocabulary consisting of urgency, credential requests, and links.
Across successive learning stages, the receiver adds action requests,
authority framing, and contextual mismatch to its feature set
\(K_\alpha(t)\). Thus the effective informative-feature coverage
\(|K_\alpha(t)\cap\mathcal{I}|/|\mathcal{I}|\) grows monotonically,
as in the update rule for \(\mathcal{L}_\alpha(t)\).

We compare two sender responses. In the fixed-stealth condition, the
sender keeps the stealth semantic control unchanged while the receiver
learns. In the adaptive-stealth condition, the sender reallocates
probability mass away from newly learned cues and toward cues that are
still outside the current feature set. This creates the co-evolutionary
pattern described in Section~\ref{sec:mindset-dynamics}: receiver
learning expands the perceptual boundary, while the sender attempts to
move persuasive evidence beyond it.

\begin{figure}[!htbp]
\centering
\includegraphics[width=0.82\linewidth]{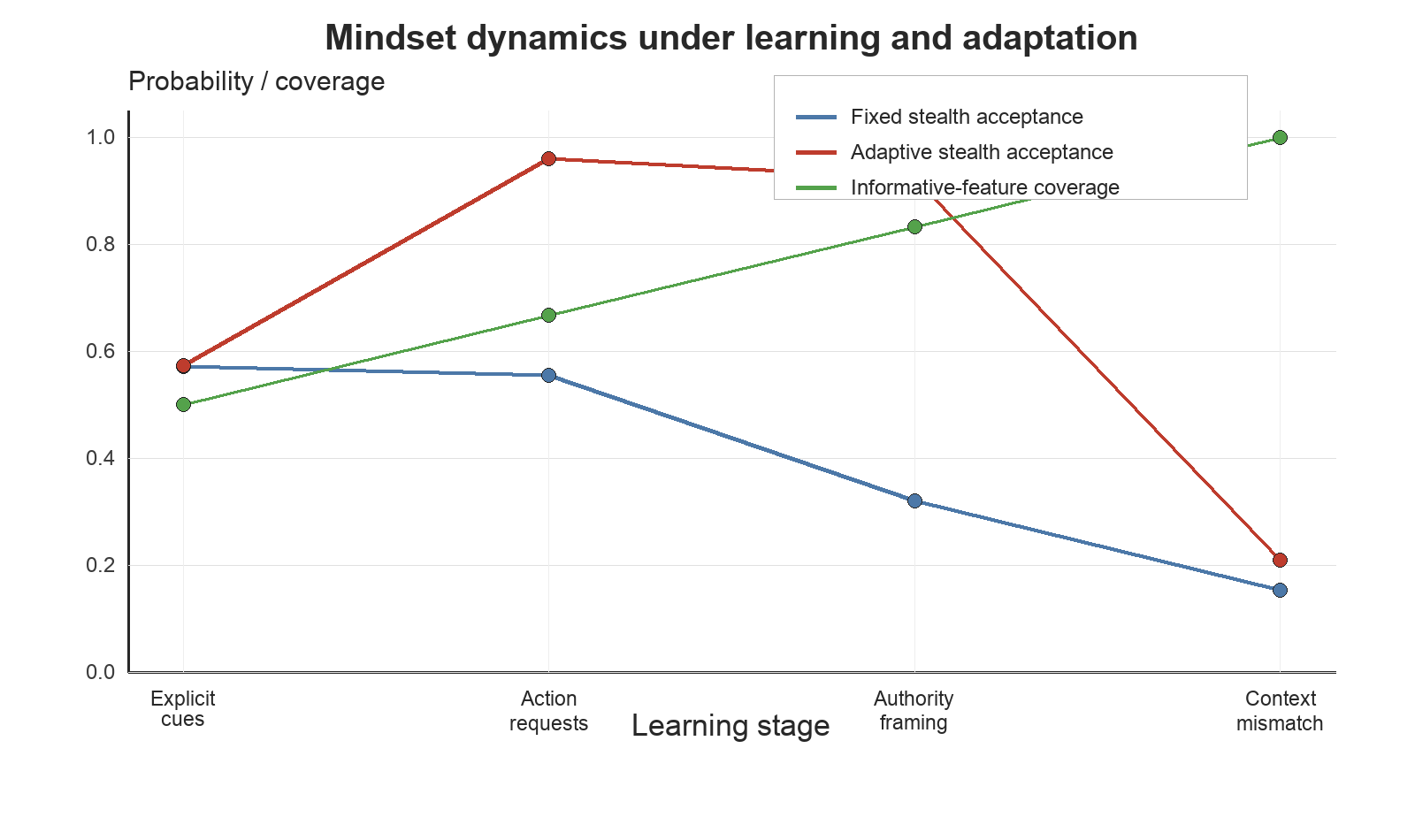}
\caption{Mindset dynamics under feature learning and sender adaptation.
Fixed stealth becomes easier to detect as \(K_\alpha(t)\) expands.
Adaptive stealth preserves evasion while informative cues remain outside
the receiver's feature set, but loses this advantage as coverage becomes
complete.}
\label{fig:case_mindset_dynamics}\vspace{-2mm}
\end{figure}

Figure~\ref{fig:case_mindset_dynamics} shows that learning alone reduces
the acceptance probability of the fixed stealth control from
\(\CaseMindsetInitialFixed\) at the initial feature set to
\(\CaseMindsetFinalFixed\) when all modeled informative categories are
represented. Adaptive stealth delays this improvement: its acceptance
probability reaches \(\CaseMindsetPeakAdaptive\) while blind regions
remain, but falls to \(\CaseMindsetFinalAdaptive\) once the receiver's
feature coverage is complete. This experiment supports the interpretation
of mindset dynamics as a moving-boundary problem rather than a static
threshold adjustment.

\FloatBarrier

\subsection{Mechanism Design and Benign Pooling}

The final experiment illustrates the benign-pooling incentive condition
from Section~\ref{sec:pbne}. We compute the malicious
sender's payoff from each semantic control using the Gaussian acceptance
model in \eqref{eq:sender_payoff}. Let \(q\in[0,1]\) denote the
share of aware receivers in the population. The remaining mass
\(1-q\) is split between naive and intermediate receivers in the same
proportion as the baseline population. For each \(q\), define the
deviation gain
\[
\Delta(q,\tau)
=
\max_{y\in\{y_1,y_2\}}U_S(\theta^{\mathrm{mal}},y;\tau)
-
U_S(\theta^{\mathrm{mal}},y_0;0),
\]
where \(\tau\) is an added cost applied to nonbenign controls. A
negative value of \(\Delta(q,\tau)\) means that the malicious type has
no profitable deviation from the benign control, so the row-wise
incentive condition for benign pooling is satisfied for that type.

\begin{figure}[!htbp]
\centering
\includegraphics[width=0.80\linewidth]{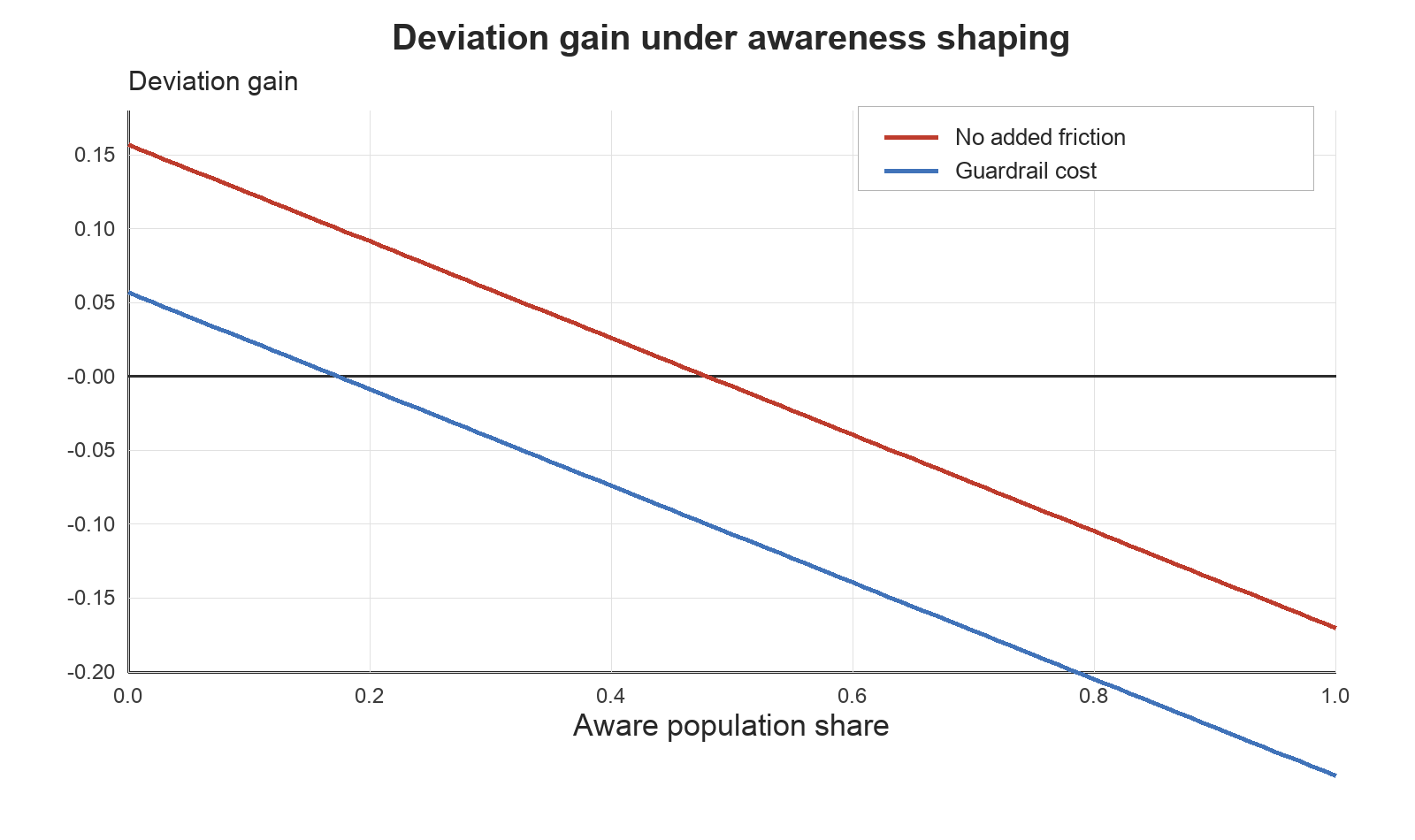}
\caption{Deviation gain for the malicious sender as the aware share of
the receiver population increases. The zero line marks the boundary of
the benign-pooling incentive condition. Awareness shaping and guardrail
costs both reduce profitable deviations.}
\label{fig:case_mechanism_design}\vspace{-2mm}
\end{figure}

Figure~\ref{fig:case_mechanism_design} shows that awareness shaping
alone can eliminate profitable malicious deviations once the aware share
reaches approximately \(\CaseCriticalNoPenalty\). With an added
guardrail cost of \(\tau=\CaseGuardrailPenalty\), the critical aware
share falls to approximately \(\CaseCriticalGuardrail\). At the baseline
population, the deviation gain is \(\CaseBaselineGap\) without the
guardrail and \(\CaseBaselineGapGuardrail\) with the guardrail. These
results illustrate the design levers in
Section~\ref{sec:pbne}: detection enrichment lowers malicious
acceptance probabilities, cost shaping penalizes deceptive controls, and
population shaping changes the effective receiver distribution in the
sender's payoff.

\FloatBarrier

\subsection{Discussion}

The numerical experiments support the paper's theoretical claims in a
single controlled setting. First, cumulative message scores behave
approximately normally, making the Gaussian detection formulas
operational. Second, semantic controls change acceptance probabilities
by shifting the induced score distribution. Third, receiver awareness is
not merely a lower threshold; it is a richer feature representation that
makes stealthy manipulation more visible. Fourth, mindset dynamics show
how learning reduces fixed stealth evasion while adaptive senders can
temporarily exploit blind regions. Finally, the mechanism-design
experiment shows how awareness shaping and adversarial friction can
convert a profitable deceptive deviation into a benign-pooling outcome.

\FloatBarrier

\section{Conclusion}\label{sec:conclusion}

This paper developed a semantic signaling game for LLM-mediated
strategic communication. The model treats prompts and other high-level
instructions as semantic controls, the LLM as a stochastic language
channel, and receiver awareness as a type-dependent scoring system. This
structure connects token-level message generation with statistical
decision rules and equilibrium behavior.

The analysis shows how aggregate linguistic scores admit Gaussian
approximations, how receivers use likelihood-ratio thresholds, and how
sender incentives depend on the induced score distributions across
heterogeneous awareness types. The equilibrium results characterize when
type-consistent semantic controls are incentive compatible, while the
mechanism-design analysis identifies conditions for benign pooling and
population shaping.

The numerical study complements the theory by showing that the Gaussian
approximation tracks simulated score distributions, that greater
awareness reduces acceptance of stealth attacks, that adaptive senders
can exploit temporary blind regions during mindset evolution, and that
population and penalty interventions can restore benign pooling.

The framework also clarifies the role of systematic blindness. A
receiver may fail not because evidence is absent from the message, but
because informative features are absent from the receiver's effective
representation. Awareness shaping and mindset dynamics can therefore be
viewed as mechanisms for expanding the receiver's measurable feature
space. Future work can extend the model to richer multi-turn
interactions, adaptive prompting, empirical estimation of score
distributions from real LLM outputs, and mechanism design for teams of
human and artificial receivers.
\bibliographystyle{abbrv}
\bibliography{refs.bib}

@article{akata2025playing,
  title={Playing repeated games with large language models},
  author={Akata, Elif and Schulz, Lisa and Coda-Forno, Julian and Oh, Seong Joon and Bethge, Matthias and Schulz, Eric},
  journal={Nature Human Behaviour},
  volume={9},
  number={7},
  pages={1380--1390},
  year={2025},
  month={Jul}
}

@article{gallotta2024llm_games,
  title={Large language models and games: A survey and roadmap},
  author={Gallotta, Roberto and Todd, Graham and Zammit, Matthew and Earle, Sam and Liapis, Antonios and Togelius, Julian and Yannakakis, Georgios N.},
  journal={IEEE Transactions on Games},
  year={2024},
  month={Sep},
  note={Published online September 13, 2024}
}

@article{yang2026agentic_ai,
  title={Internet of Agentic AI: Incentive-Compatible Distributed Teaming and Workflow},
  author={Yang, Y. T. and Zhu, Quanyan},
  journal={arXiv preprint arXiv:2602.03145},
  year={2026},
  month={Feb},
  day={3}
}

@article{zhu2025cybersecurity,
  title={Game theory meets llm and agentic ai: Reimagining cybersecurity for the age of intelligent threats},
  author={Zhu, Quanyan},
  journal={arXiv preprint arXiv:2507.10621},
  year={2025},
  month={Jul},
  day={14}
}

@article{zhu2025llm_nash,
  title={Reasoning and behavioral equilibria in LLM-Nash games: From mindsets to actions},
  author={Zhu, Quanyan},
  journal={arXiv preprint arXiv:2507.08208},
  year={2025},
  month={Jul},
  day={10}
}

@article{banks1987equilibrium,
  title={Equilibrium selection in signaling games},
  author={Banks, Jeffrey S and Sobel, Joel},
  journal={Econometrica: Journal of the Econometric Society},
  pages={647--661},
  year={1987},
  publisher={JSTOR}
}

@article{cho1987signaling,
  title={Signaling Games and Stable Equilibria},
  author={Cho, In-Koo and Kreps, David M.},
  journal={The Quarterly Journal of Economics},
  volume={102},
  number={2},
  pages={179--221},
  year={1987},
  publisher={Oxford University Press}
}

@article{farrell1996cheap,
  title={Cheap Talk},
  author={Farrell, Joseph and Rabin, Matthew},
  journal={Journal of Economic Perspectives},
  volume={10},
  number={3},
  pages={103--118},
  year={1996}
}

@article{sobel2020lying,
  title={Lying and deception in games},
  author={Sobel, Joel},
  journal={Journal of Political Economy},
  volume={128},
  number={3},
  pages={907--947},
  year={2020},
  publisher={The University of Chicago Press Chicago, IL}
}

@article{aumann1999interactive,
  title={Interactive epistemology I: knowledge},
  author={Aumann, Robert J},
  journal={International Journal of Game Theory},
  volume={28},
  number={3},
  pages={263--300},
  year={1999},
  publisher={Springer}
}

@article{demichelis2008language,
  title={Language, meaning, and games: A model of communication, coordination, and evolution},
  author={Demichelis, Stefano and Weibull, J{\"o}rgen W},
  journal={American Economic Review},
  volume={98},
  number={4},
  pages={1292--1311},
  year={2008},
  publisher={American Economic Association}
}

@article{kamenica2011bayesian,
  title={Bayesian Persuasion},
  author={Kamenica, Emir and Gentzkow, Matthew},
  journal={American Economic Review},
  volume={101},
  number={6},
  pages={2590--2615},
  year={2011}
}

@article{bergemann2019information,
  title={Information Design: A Unified Perspective},
  author={Bergemann, Dirk and Morris, Stephen},
  journal={Journal of Economic Literature},
  volume={57},
  number={1},
  pages={44--95},
  year={2019}
}

@article{Hu2024GameTheoreticNP,
  author    = {Y. Hu and J. Chen and Q. Zhu},
  title     = {Game-Theoretic Neyman--Pearson Detection to Combat Strategic Evasion},
  journal   = {IEEE Transactions on Information Forensics and Security},
  year      = {2024},
  volume    = {20},
  pages     = {516--530},
  doi       = {}
}

@inproceedings{Zhang2018HypothesisTestingGame,
  author    = {T. Zhang and Q. Zhu},
  title     = {Hypothesis Testing Game for Cyber Deception},
  booktitle = {Proceedings of the International Conference on Decision and Game Theory for Security (GameSec)},
  year      = {2018},
  pages     = {540--555},
  publisher = {Springer},
  address   = {Cham},
  doi       = {}
}

@inproceedings{Li2023PriceTransparency,
  author    = {T. Li and Q. Zhu},
  title     = {On the Price of Transparency: A Comparison Between Overt Persuasion and Covert Signaling},
  booktitle = {Proceedings of the 62nd IEEE Conference on Decision and Control (CDC)},
  year      = {2023},
  pages     = {4267--4272},
  publisher = {IEEE},
  doi       = {}
}

@article{Pawlick2019LeakyDeception,
  author    = {J. Pawlick and E. Colbert and Q. Zhu},
  title     = {Modeling and Analysis of Leaky Deception Using Signaling Games with Evidence},
  journal   = {IEEE Transactions on Information Forensics and Security},
  year      = {2019},
  volume    = {14},
  number    = {7},
  pages     = {1871--1886},
  doi       = {}
}

@book{hall2014martingale,
  title={Martingale limit theory and its application},
  author={Hall, Peter and Heyde, Christopher C},
  year={2014},
  publisher={Academic press}
}

@book{wittgenstein1922tractatus,
  title={Tractatus Logico-Philosophicus},
  author={Wittgenstein, Ludwig},
  year={1922},
  publisher={Routledge and Kegan Paul},
  address={London},
  note={Originally published in German as "Logisch-Philosophische Abhandlung"}
}

@book{simon1957models,
  title={Models of Man: Social and Rational; Mathematical Essays on Rational Human Behavior in a Social Setting},
  author={Simon, Herbert A.},
  year={1957},
  publisher={Wiley},
  address={New York}
}

@book{cover1999elements,
  title={Elements of Information Theory},
  author={Cover, Thomas M. and Thomas, Joy A.},
  edition={2nd},
  year={2006},
  publisher={Wiley},
  address={New York}
}

@article{white1982maximum,
  title={Maximum Likelihood Estimation of Misspecified Models},
  author={White, Halbert},
  journal={Econometrica},
  volume={50},
  number={1},
  pages={1--25},
  year={1982}
}

@article{zhu2025llm,
  title={LLM-Stackelberg Games: Conjectural Reasoning Equilibria and Their Applications to Spearphishing},
  author={Zhu, Quanyan},
  journal={arXiv preprint arXiv:2507.09407},
  year={2025}
}

@inproceedings{zhu2025generative,
  title={Generative-Conjectural LLM Equilibrium for Agentic AI Deception with Applications to Spearphishing},
  author={Zhu, Quanyan},
  booktitle={International Conference on Game Theory and AI for Security},
  pages={356--375},
  year={2025},
  organization={Springer}
}

@inproceedings{ouyang2022training,
  title={Training Language Models to Follow Instructions with Human Feedback},
  author={Ouyang, Long and Wu, Jeffrey and Jiang, Xu and Almeida, Diogo and Wainwright, Carroll L. and Mishkin, Pamela and Zhang, Chong and Agarwal, Sandhini and Slama, Katarina and Ray, Alex and Schulman, John and Hilton, Jacob and Kelton, Fraser and Miller, Luke and Simens, Maddie and Askell, Amanda and Welinder, Peter and Christiano, Paul and Leike, Jan and Lowe, Ryan},
  booktitle={Advances in Neural Information Processing Systems},
  volume={35},
  pages={27730--27744},
  year={2022}
}

@inproceedings{park2023generative,
  title={Generative Agents: Interactive Simulacra of Human Behavior},
  author={Park, Joon Sung and O'Brien, Joseph C. and Cai, Carrie J. and Morris, Meredith Ringel and Liang, Percy and Bernstein, Michael S.},
  booktitle={Proceedings of the 36th Annual ACM Symposium on User Interface Software and Technology},
  pages={1--22},
  year={2023},
  publisher={ACM}
}

@article{huang2022advert,
  title={Advert: an adaptive and data-driven attention enhancement mechanism for phishing prevention},
  author={Huang, Linan and Jia, Shumeng and Balcetis, Emily and Zhu, Quanyan},
  journal={IEEE Transactions on Information Forensics and Security},
  volume={17},
  pages={2585--2597},
  year={2022},
  publisher={IEEE}
}

@article{pawlick2017phishing,
  title={Phishing for phools in the internet of things: Modeling one-to-many deception using poisson signaling games},
  author={Pawlick, Jeffrey and Zhu, Quanyan},
  journal={arXiv preprint arXiv:1703.05234},
  year={2017}
}

@article{jagatic2007social,
  title={Social Phishing},
  author={Jagatic, Tom N. and Johnson, Nathaniel A. and Jakobsson, Markus and Menczer, Filippo},
  journal={Communications of the ACM},
  volume={50},
  number={10},
  pages={94--100},
  year={2007},
  publisher={ACM}
}

@book{huang2023cognitive,
  title={Cognitive Security: A System-Scientific Approach},
  author={Huang, Ling and Zhu, Quanyan},
  publisher={Springer Nature},
  year={2023},
  month={Jun},
  day={2}
}

@article{spence1973job,
  title={Job Market Signaling},
  author={Spence, Michael},
  journal={The Quarterly Journal of Economics},
  volume={87},
  number={3},
  pages={355--374},
  year={1973},
  publisher={Oxford University Press}
}

@article{crawford1982strategic,
  title={Strategic Information Transmission},
  author={Crawford, Vincent P. and Sobel, Joel},
  journal={Econometrica},
  volume={50},
  number={6},
  pages={1431--1451},
  year={1982}
}

@book{lewis1969convention,
  title={Convention: A Philosophical Study},
  author={Lewis, David},
  publisher={Harvard University Press},
  address={Cambridge, MA},
  year={1969}
}

@inproceedings{vaswani2017attention,
  title={Attention Is All You Need},
  author={Vaswani, Ashish and Shazeer, Noam and Parmar, Niki and Uszkoreit, Jakob and Jones, Llion and Gomez, Aidan N. and Kaiser, Lukasz and Polosukhin, Illia},
  booktitle={Advances in Neural Information Processing Systems},
  volume={30},
  year={2017}
}

@inproceedings{brown2020language,
  title={Language Models Are Few-Shot Learners},
  author={Brown, Tom B. and Mann, Benjamin and Ryder, Nick and Subbiah, Melanie and Kaplan, Jared and Dhariwal, Prafulla and Neelakantan, Arvind and Shyam, Pranav and Sastry, Girish and Askell, Amanda and Agarwal, Sandhini and Herbert-Voss, Ariel and Krueger, Gretchen and Henighan, Tom and Child, Rewon and Ramesh, Aditya and Ziegler, Daniel M. and Wu, Jeffrey and Winter, Clemens and Hesse, Christopher and Chen, Mark and Sigler, Eric and Litwin, Mateusz and Gray, Scott and Chess, Benjamin and Clark, Jack and Berner, Christopher and McCandlish, Sam and Radford, Alec and Sutskever, Ilya and Amodei, Dario},
  booktitle={Advances in Neural Information Processing Systems},
  volume={33},
  pages={1877--1901},
  year={2020}
}

@article{bommasani2021opportunities,
  title={On the Opportunities and Risks of Foundation Models},
  author={Bommasani, Rishi and Hudson, Drew A. and Adeli, Ehsan and Altman, Russ and Arora, Simran and von Arx, Sydney and Bernstein, Michael S. and Bohg, Jeannette and Bosselut, Antoine and Brunskill, Emma and Brynjolfsson, Erik and Buch, Shyamal and Card, Dallas and Castellon, Rodrigo and Chatterji, Niladri and Chen, Annie and Creel, Kathleen and Davis, Jared Quincy and Demszky, Dora and Donahue, Chris and Doumbouya, Moussa and Durmus, Esin and Ermon, Stefano and Etchemendy, John and Ethayarajh, Kawin and Fei-Fei, Li and Finn, Chelsea and Gale, Trevor and Gillespie, Lauren and Goel, Karan and Goodman, Noah and Grossman, Shelby and Guha, Neel and Hashimoto, Tatsunori and Henderson, Peter and Hewitt, John and Ho, Daniel E. and Hong, Jenny and Hsu, Kyle and Huang, Jing and Icard, Thomas and Jain, Saahil and Jurafsky, Dan and Kalluri, Pratyusha and Karamcheti, Siddharth and Keeling, Geoff and Khani, Fereshte and Khattab, Omar and Koh, Pang Wei and Krass, Mark and Krishna, Ranjay and Kuditipudi, Rohith and Kumar, Ananya and Ladhak, Faisal and Lee, Mina and Lee, Tony and Leskovec, Jure and Levent, Isabelle and Li, Xiang Lisa and Li, Xuechen and Ma, Tengyu and Malik, Ali and Manning, Christopher D. and Mirchandani, Suvir and Mitchell, Eric and Munyikwa, Zanele and Nair, Suraj and Narayan, Avanika and Narayanan, Deepak and Newman, Ben and Nie, Allen and Niebles, Juan Carlos and Nilforoshan, Hamed and Nyarko, Julian and Ogut, Giray and Orr, Laurel and Papadimitriou, Isabel and Park, Joon Sung and Piech, Chris and Portelance, Eva and Potts, Christopher and Raghunathan, Aditi and Reich, Rob and Ren, Hongyu and Rong, Frieda and Roohani, Yusuf and Ruiz, Camilo and Ryan, Jack and R{\'e}, Christopher and Sadigh, Dorsa and Sagawa, Shiori and Santhanam, Keshav and Shih, Andy and Srinivasan, Krishnan and Tamkin, Alex and Taori, Rohan and Thomas, Armin W. and Tram{\`e}r, Florian and Wang, Rose E. and Wang, William and Wu, Bohan and Wu, Jiajun and Wu, Yuhuai and Xie, Sang Michael and Yasunaga, Michihiro and You, Jiaxuan and Zaharia, Matei and Zhang, Michael and Zhang, Tianyi and Zhang, Xikun and Zhang, Yuhui and Zheng, Lucia and Zhou, Kaitlyn and Liang, Percy},
  journal={arXiv preprint arXiv:2108.07258},
  year={2021}
}

@misc{cyb3rmik3_phishing_keywords_2023,
  title={{Hunting-Lists}: phishing-keywords.csv},
  author={{cyb3rmik3}},
  howpublished={\url{https://github.com/cyb3rmik3/Hunting-Lists/commit/d1d36d7c7ee6c4e8bc491e8ae022e82c135e76da}},
  note={GitHub commit d1d36d7c7ee6c4e8bc491e8ae022e82c135e76da, accessed June 16, 2026},
  year={2023}
}


\end{document}